\theoremstyle{plain}
\newtheorem{thm}{Theorem}[section] 
\newtheorem{lemma}[thm]{Lemma}
\newtheorem{prop}[thm]{Proposition}
\newcommand{\nwc}{\newcommand}
\nwc{\bit}{\begin{itemize}}
\nwc{\eit}{\end{itemize}}
\nwc{\Levy}{L\'evy}
\nwc{\LK}{L\'evy-Khintchine}
\nwc{\LI}{L\'evy-It\^{o}}
\nwc{\be}{\begin{equation}}
\nwc{\ee}{\end{equation}}
\nwc{\ba}{\begin{eqnarray}}
\nwc{\ea}{\end{eqnarray}}
\nwc{\la}{\label}
\nwc{\nn}{\nonumber}
\nwc{\Z}{\mathbb{Z}}
\nwc{\C}{\mathbb{C}}
\nwc{\E}{\mathbb{E}}
\nwc{\R}{\mathbb{R}}
\nwc{\Rplus}{\R_+}
\nwc{\N}{\mathbb{N}}
\nwc{\attr}{\mathcal{A}_{\rm p}}
\nwc{\attrx}{\mathcal{A}}
\nwc{\PP}{\mathcal{P}}
\nwc{\PPE}{\mathcal{P}(\Rplus)}
\nwc{\M}{\mathcal{M}}
\nwc{\Tt}{T^{(t)}}
\nwc{\Ut}{U^{(t)}}
\nwc{\gtx}{g^{(t)}_x}
\nwc{\fun}{f}
\nwc{\funinv}{f^{\dag}} 
\nwc{\law}{\stackrel{\mathcal{L}}{\rightarrow}}
\nwc{\eqd}{\stackrel{d}{=}}
\nwc{\vp}{\varphi}
\nwc{\Vp}{\varphi}
\nwc{\psilevy}{\Psi}
\nwc{\ve}{\varepsilon}
\nwc{\veps}{\varepsilon}
\nwc{\eps}{\ve}
\nwc{\betarsc}{\beta}
\nwc{\cl}{c\'{a}dl\'{a}g}
\nwc{\qref}[1]{(\ref{#1})}
\nwc{\D}{\partial}
\nwc{\Ebar}{{\bar{E}}}
\nwc{\ebar}{[0,\infty)}
\nwc{\mmt}{m}
\nwc{\fzero}{F_\rho} 
\nwc{\fone}{M_\rho} 
\nwc{\ip}[1]{\langle #1 \rangle}
\nwc{\ipbig}[1]{\left\langle #1 \right\rangle}
\nwc{\Lip}{\mathop{\rm Lip}\nolimits}
\nwc{\Tmin}{T_{\min}}
\nwc{\Tmax}{T_{\max}}
\nwc{\Tgel}{T_{\rm gel}}
\nwc{\LL}{\mathcal{L}}
\nwc{\mudot}{\mu}
\nwc{\nudot}{\nu}
\nwc{\rme}{{\rm e}}
\nwc{\rmi}{{\rm i}}
\nwc{\nsup}{^{(n)}}
\nwc{\thetasup}{^{(\theta)}}
\nwc{\tausup}{^{(\tau)}}
\nwc{\lambdasup}{^{(\lambda)}}
\nwc{\nsupj}{^{(n_j)}}
\nwc{\ksup}{^{(k)}}
\nwc{\jsup}{^{(j)}}
\nwc{\tsup}{^{(t)}}
\nwc{\nksup}{^{(n_k)}}
\nwc{\inv}{^{-1}}
\nwc{\iinv}{^{\dag}}
\nwc{\qxfac}{(1-\rme^{-qx})}
\nwc{\Mcan}{\mathcal{S}_d}
\nwc{\Mcanx}{\mathcal{S}}
\nwc{\sm}{G}
\nwc{\sv}{L}
\nwc{\rv}{R}
\nwc{\RV}{RV}
\nwc{\dsm}{H}
\nwc{\Hmap}{{\mathfrak S}_p}
\nwc{\Hmapx}{{\mathfrak S}}
\nwc{\canmap}{f}
\nwc{\canonical}{\mathcal{M}}
\nwc{\dnto}{\downarrow}
\nwc{\upto}{\uparrow}
\nwc{\cpair}{c} 
\nwc{\intR}{\int_0^\infty}
\nwc{\tnu}{\tilde\nu}
\nwc{\smeas}{g-measure} 
\nwc{\smeass}{g-measures} 
\nwc{\barsmeas}{$\overline{\mbox{g}}$-measure}
\nwc{\barsmeass}{$\overline{\mbox{g}}$-measures} 
\nwc{\nubar}{\bar{\nu}}
\nwc{\mubar}{\bar{\mu}}
\nwc{\dust}{a_0}
\nwc{\bdust}{b_0}
\nwc{\gel}{a_\infty}
\nwc{\bgel}{b_\infty}
\nwc{\ggel}{g_\infty}
\nwc{\Fbar}{\bar{F}}
\nwc{\Fcheck}{\check{F}}
\nwc{\nucheck}{\tilde{\nu}}
\nwc{\hvp}{\hat{\vp}}
\nwc{\hpsi}{\hat{\psi}}
\nwc{\htt}{\hat{t}}
\nwc{\hq}{\hat{q}}
\nwc{\hs}{\hat{s}}
\nwc{\dist}{\mathop{\rm dist}\nolimits}
\nwc{\extsol}{\Fbar}
\nwc{\specialF}{\hat\Fbar}
\nwc{\esm}{\bar{\sm}}
\nwc{\tlam}{\tilde\lambda}
\nwc{\lgr}{a} 
\nwc{\Fgr}{F_{\rho,\gamma}}
\nwc{\vpgr}{\vp_{\rho,\gamma}}
\nwc{\len}{l}
\nwc{\den}{\rho}
\nwc{\ldot}{\dot{\len}}
\nwc{\rhobar}{\bar{\rho}}
\nwc{\fbar}{\bar{f}}
\nwc{\one}{\mathbf{1}}
\nwc{\alphabar}{\bar{A}}
\nwc{\betabar}{\bar{\beta}}
\nwc{\rhoinit}{\hat\rho}
\nwc{\newrho}{F}
\nwc{\Rbar}{\bar{\newrho}}
\nwc{\Rbarhat}{\hat{\Rbar}}
\nwc{\lap}{\eta}
\nwc{\lapdsm}{\eta_{*}} 
\nwc{\Q}{\mathbb{Q}}
\nwc{\forma}{\Vp}
\nwc{\formb}{\Psi}
\nwc{\invlen}{\len^{\dag}}
\nwc{\taumin}{\tau_*}
\nwc{\kappash}{\kappa^\#}
\theoremstyle{definition}
\newtheorem{defn}{Definition}[section]
\theoremstyle{remark}
\numberwithin{equation}{section}
\numberwithin{figure}{section}
\begin{document}
\title{Dynamics and self-similarity in min-driven clustering}
\author{Govind Menon\textsuperscript{1}, Barbara
  Niethammer\textsuperscript{2} and Robert L.  Pego\textsuperscript{3}}

\date{\today}

\maketitle

\begin{abstract}
We study a mean-field model for a clustering process that may be
described informally as follows. At each step a random integer $k$ is
chosen with probability $p_k$, and the smallest cluster merges with
$k$ randomly chosen clusters. We prove that the model determines a
continuous dynamical system on the space of probability measures 
supported in $(0,\infty)$,  
and we establish necessary and sufficient conditions for
approach to self-similar form. We also characterize eternal
solutions for this model via a \LK\/ formula. The analysis is based on 
an explicit solution formula discovered by Gallay and Mielke, extended
using a careful choice of time scale. 
\end{abstract} 

\noindent
Keywords:  dynamic scaling, coalescence,
Allen-Cahn equation, domain coarsening, 
Smoluchowski's coagulation equations.

\footnotetext[1]
{Division of Applied Mathematics, Box F, Brown University, Providence, RI 02912.
Email: menon@dam.brown.edu}
\footnotetext[2]
{Mathematical Institute, University of Oxford,
Oxford, OX1 3LB, UK. Email: niethammer@maths.ox.ac.uk}
\footnotetext[3]{Department of Mathematical Sciences and 
Center for Nonlinear Analysis, 
Carnegie Mellon University, Pittsburgh, PA 15213.
Email: rpego@cmu.edu}

\pagebreak
\section{Introduction}
\label{sec:intro}
\subsection{A mean-field model for clustering}
The clustering processes we consider are motivated by a simplified model
for domain wall motion in the one-dimensional Allen-Cahn equation 
$\D_tu=\D_{xx}u+u-u^3$. 
The domain walls become points on the line, and the domains are the intervals
separated by these points. The pattern coarsens by a simple rule:
At each step, the smallest domain combines with its two neighbors to
form a single domain, and this is repeated indefinitely.  
Computational simulations of this
`min-driven' domain coarsening process indicate that for a considerable
variety of initial distributions, the domain size distribution
approaches self-similar form \cite{CP,NK}. 

A mean-field model of this process was derived by Nagai and Kawasaki \cite{NK},
and it turns out to be amenable to a rigorous analysis aimed at explaining this
behavior \cite{CP,GM}.
We consider an infinite number of domains on the line, and
study the statistics of domain sizes using a number density
function $f(t,x)$. We assume that in any interval $I$ of unit length,
the expected number of domains with lengths in the range $(x,x+dx)$ is
given by $f(t,x)\, dx$.  The expected value of the total number of
domains in $I$ is denoted $N(t)= \int_0^\infty f(t,x) \, dx$. We
assume that $N(t)$ is finite, 
and denote the associated probability density by
\be
\label{defrho}
\rho_t(x) = \frac{f(t,x)}{N(t)}.
\ee
We let $\len(t)$ denote the size of the smallest domain at time $t$,
so $f(t,x)=0$ for $x<\len(t)$. 
The expected number of coalescence events per unit time is then
\be
\label{numevents}
f(t,\len) \dot{\len}.
\ee
The coalescence events affecting domains of size $x$ in the
time interval $(t,t+dt)$ are
(a) loss: consecutive domains of size $x,\len,y$ (or $y,\len,x$)
combine to form a domain of size $x+\len+y$; (b) gain: domains of size
$y,\len,x-y-\len$ combine to form a domain of size $x$. Under the
mean-field assumption that coalescing domains have sizes
chosen randomly and independently from the current overall size
distribution, these events have respective relative probability density
\[ \rho_t(x)  \rho_t(y) , \quad \rho_t(y)  \rho_t(x) , \quad \rho_t(y)
\rho_t(x-y-\len).\]
The rate equation for the evolution of $f$ is obtained by summing over
all loss and gain terms:
\ba
\label{evol1}
\lefteqn{\partial_t f(t,x) =}\\
\nn
&& f(t,\len) \dot{\len} \left( \int_{\len}^{x-\len} \rho_t(y)
  \rho_t(x-y-\len)\, dy - 2 
  \rho_t(x) \int_{\len}^\infty \rho_t(y)\, dy \right), \quad x >\len.
\ea

Clustering phenomena are seen in fields as varied as population
genetics and physical chemistry. Thus, while coarsening of intervals
provides concrete motivation, the notion of a `cluster' can have 
widely different interpretations in applications.  
The model \qref{evol1} above is one of a family of 
what we call {\em min-driven clustering} models 
that can be analyzed together in one setting as in \cite{GM}.
At each step a random integer $k \geq 1$ is chosen with
probability $p_k$, and the smallest cluster merges with $k$ randomly
chosen clusters. The mean-field assumption is that all these random
variables are independent. The only assumptions we impose on the probabilities
are that
\be
\label{prob}
p_k \geq 0, \quad \sum_{k=1}^\infty p_k =1, \quad \sum_{k=1}^\infty
k p_k < \infty.
\ee
The evolution of the number density under this process is described by
the following rate equation for  cluster size density, obtained 
by summing over all loss and gain events as in \qref{evol1}: 
\be
\label{evol2a}
\partial_t f(t,x) = f(t,l) \dot{l} \sum_{k=1}^\infty p_k \left(
  \rho_t^{\star k} (x-l) - k\rho_t(x)\right), \quad x > \len(t).
\ee
Here the notation $\rho_t^{\star k}$ denotes $k$-fold
self-convolution. The case $p_1=1$ corresponds to a ``paste-all'' model
discussed by Derrida et al.\ \cite{Derrida}. Also see \cite{IKR} for a
model of social conflict with rather similar solution formulas.

An important feature of the model \qref{evol2a} is its invariance under
reparame\-trization in time. If we change variables via $t=
T(\tilde{t})$, $\tilde{f}(x,\tilde{t})=f(x,t)$,
$\tilde{\len}({\tilde{t}})=\len(t)$, then equation \qref{moment1}
retains its form since 
\[ 
\partial_{\tilde{t}}\tilde{f}(x,\tilde{t}) =
\dot{T} \partial_t f, \quad \partial_{\tilde{t}}\tilde{\len} 
= \dot{T} \partial_{t}\len. 
\] 
A careful choice of the time scale is key to the analysis.  In the
first mathematical study of \qref{evol1}~\cite{CP}, the authors imposed the
relation $f(t,\len) \dot{\len}=1$, meaning the number of coalescence
events per unit time is constant~\cite{CP}.  In a more recent paper,
Gallay and Mielke parametrized time by the minimum size, that is
$\len(t)=t$~\cite{GM}. This leads to an elegant solution procedure
that was used to prove some basic results on well-posedness and 
the approach to self-similarity. Gallay and Mielke showed that
\qref{evol2a} defines a strongly continuous flow in $L^1$, that
\qref{evol2a} admits a one-parameter family of self-similar solutions
and that suitable initial densities yield convergence to self-similar
form.  Precise comparisons between these results and ours are made
later in this paper.

In this article, we introduce yet another time scale. We
parametrize time inversely to the total number of domains, so that
\be
\label{intime}
t= N(t)^{-1}.
\ee
We shall argue that this is a natural choice for a number of
reasons. It retains the simplicity of the choice $\len(t)=t$, and 
allows us to obtain: (a) existence and uniqueness for measure-valued solutions; 
(b) necessary and sufficient conditions for convergence to self-similar 
form; (c) a characterization of eternal solutions for
 the dynamical system defined by \qref{evol1}. We comment on these in
 greater depth below. 

Let us first explain one simple motivation for  \qref{intime}. 
We show below that for \qref{evol1}, $f(t,\len)\dot{\len}=-\dot{N}/2$. 
Thus, $f(t,\len)\dot{\len}=N^2/2$ when \qref{intime} 
holds, and \qref{evol1} now takes the form
\ba
\label{evolsmol}
\lefteqn{\partial_t f(t,x) =}\\
\nn
&& \frac{1}{2}\int_{\len}^{x-\len} f(t,y) f(t,x-y-\len)\, dy -  
  f(t,x) \int_\len^\infty f(t,y)\, dy, \quad x >\len.
\ea
If we take $\len(t)\equiv 0$ (as a model when the smallest domains
have negligible size, for example), this reduces to a basic solvable model of
clustering: Smoluchowski's coagulation equation with constant kernel.
In recent work, we provided a comprehensive analysis of dynamic
scaling in this equation by exploiting an analogy with the classical
limit theorems of probability theory~\cite{MP1,MP3}. We use these insights
to guide our study of \qref{evol2a}.

\subsection{Measure-valued solutions}
Mean-field models of domain coarsening, such as the LSW model, Smoluchowski's
coagulation equation, and \qref{evol2a}, correspond to physical
processes where mass is transported from small to large scales. 
For several reasons, it is natural to consider measure-valued solutions,
for which the size distribution need not have a continuous or
integrable density.  Such solutions are physically meaningful, as many
clustering processes (e.g., polymerization) involve
a discrete set of sizes based on an elementary unit. 
A formulation via measures is also mathematically elegant, as it allows us
to unify the treatment of discrete and continuous coagulation
models, exploit simple criteria for compactness and continuity, and
prove basic uniform estimates.

To any solution of \qref{evol2a} we associate a probability
measure $F_t$ with distribution function written
\begin{equation}
F_t(x) = \frac1{N(t)}\int_0^x f(t,y)\,dy  = \int_0^x \rho_t(y)\,dy.
\end{equation}
For any probability distribution $F$ on $[0,\infty)$, we call
$\len=\inf\{x| F(x)>0\} $ the {\em min} of $F$. (This is short for ``minimum
size,'' regarding $F$ as a probability distribution for size.)
We will prove that the initial-value problem for an appropriate weak
form of \qref{evol2a} is well-posed for initial probability measures
$F_{t_0}$ with positive min.  That is, \qref{evol2a} with
\qref{intime} determines a
continuous dynamical system on the space of probability measures with
positive min, equipped with the weak topology. 
See Theorem~\ref{thm.wp} below.   By comparison, Gallay and Mielke
established that the initial-value for \qref{evol2a} defines a
continuous dynamical system on the space of probability {\em
densities\/} in $L^1(1,\infty)$ equipped with the strong
topology~\cite[Thm. 3.3]{GM}. 
The solutions we construct arise by a natural completion of
these $L^1$ dynamics.

\subsection{Dynamic scaling}
A common theme in recent studies of dynamic scaling in
mean-field models of 
coarsening is that the approach to self-similarity is both degenerate
and delicate. The problem is  degenerate because there is a one-parameter
family of self-similar solutions. For the model studied here with
$p_k=0$ for all $k$ large enough, Gallay
and Mielke found a family of self-similar solutions that may be 
rewritten in the time scale \qref{intime} in the form
\be
\label{ftheta1}
F_t(x) = F\thetasup\left(\frac{x}{l\thetasup(t)}\right), \quad l\thetasup(t) =
t^{1/\theta}, \quad \theta \in (0,1], \quad t >0.
\ee
Here $F\thetasup$ is a  probability distribution with density
$\rho\thetasup$ supported on $[1,\infty)$. 
The density $\rho\thetasup$ is known explicitly only through its
Laplace transform.  Only $\rho^{(1)}$ has finite
mass (first moment); $\rho^{(1)}(x)$ decays exponentially as
$x\to\infty$. 
The distributions $F\thetasup$ for $0<\theta<1$
have heavy tails, with  
$\rho\thetasup(x) \sim c_\theta x^{-(1+\theta)}$ as $x\to\infty$
(see Theorem~\ref{t.ss}).

The problem is delicate because the domains of
attraction of the self-similar solutions are determined by the tails
of the initial size distribution, in the precise manner explained below.
(See~\cite{MP1,NP1} for analogous results on the LSW model of Ostwald
ripening and Smoluchowski's coagulation equations with solvable kernels.) 
Gallay and Mielke showed that all densities with
finite mass are attracted to the self-similar solution with
$\theta=1$. Moreover, for $0<\theta \leq 1$ they showed that if the
initial data $\rho_{t_0}$ is sufficiently close to $\rho\thetasup$ in
a suitable weighted norm then the rescaled probability density $t\rho_{t}(tx)$
approaches $\rho\thetasup$ with a rate of convergence determined by
the weighted norm (see~\cite[Thms. 5.5, 5.7]{GM}). These results provide
sufficient conditions for approach to self-similarity. 

Our aim is to establish conditions that are both necessary and
sufficient to answer a more general question about
{\em arbitrary\/} scaling limits.
We characterize the set of all non-degenerate limits under a
general rescaling  of the form $F_t(\lambda(t) x)$ where 
$\lambda(t)$ is a measurable, positive function such that $\lim_{t \to \infty}
\lambda(t)=\infty$. A limit is non-degenerate
if it is suitable data for the initial-value problem. That is,
non-degenerate limits are probability distributions with a positive min.
\begin{thm}
\label{thm.domains}
Let $t_0>0$, let $F_{t_0}$ be an arbitrary probability measure on
$(0,\infty)$ with positive min,
and let $F_t$ ($t\ge t_0$) 
be the associated measure-valued solution of \qref{evol2a}
(see Theorem~\ref{thm.wp}).
\begin{enumerate}
\item[(i)] Suppose there is a measurable, positive function
  $\lambda(t)\to\infty$ as $t\to\infty$ 
and a probability measure $F_*$ with positive min, such that 
\be
\label{conv1}
\lim_{t \to \infty} F_t(\lambda(t)x) = F_*(x)
\ee
at all points of continuity of $F_*$. Then there exists $\theta \in
(0,1]$ and a function $L$ slowly varying at infinity such
that the initial data $F_{t_0}$ satisfies
\be
\label{conv2}
\int_0^x y F_{t_0}(dy) \sim x^{1-\theta} L(x), \quad
\mathrm{as}\quad x \to \infty.
\ee
Moreover, the min $\len(t)$ of $F_t$ and the rescaling $\lambda(t)$
satisfy 
\be
\label{conv3}
 \lambda(t)l_* \sim \len(t) \sim t^{1/\theta} \tilde{L}(t),\quad
\mathrm{as}\quad t \to \infty,
\ee
where $l_*>0$ is the min of $F_*$
and $\tilde{L}$ is slowly varying at infinity, related to $L$ by
\qref{e.len}.

\item[(ii)] Conversely, assume there exists $\theta \in (0,1]$ and a
  function $L$ slowly varying at infinity such that the initial data
  satisfies \qref{conv2}. Then $\len(t)$ satisfies
  \qref{conv3}, and 
\be
\label{conv4}
\lim_{t \to \infty} F_t(\len(t) x) = F\thetasup(x), \quad x \in (0,\infty).
\ee
\end{enumerate}
\end{thm}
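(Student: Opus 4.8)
The plan is to exploit the explicit solution formula (à la Gallay--Mielke, adapted to the time scale \qref{intime}) to reduce the PDE dynamics to the evolution of a single scalar transform, and then translate the statement into a classical Tauberian/limit-theorem question. Concretely, I expect that after passing to Laplace variables the solution of \qref{evol2a}--\qref{intime} is governed by an explicit flow for a quantity like $\vp_t(q) = \int_0^\infty(1-\rme^{-qx})\,\rho_t(dx)$ or its reciprocal, and that in the right variables this flow linearizes. The key structural fact I would isolate first is: there is a fixed (time-independent) function $\Psi$ and a reparametrization so that the ``profile'' $F_t(\len(t)\,\cdot)$ is determined by the single scalar ratio measuring how the mass/first-moment tail of $F_{t_0}$ has been stretched out by time $t$. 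Once this is in place, convergence of $F_t(\lambda(t)x)$ to a non-degenerate $F_*$ becomes convergence of a one-parameter family of rescaled transforms, which is exactly the setting of the extended continuity theorem for Laplace transforms.

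**Next I would prove (ii), the sufficiency direction, since it is cleaner.** Assuming \qref{conv2}, i.e.\ that the truncated first moment $\int_0^x yF_{t_0}(dy)$ is regularly varying of index $1-\theta$, I would use a Tauberian theorem to convert this into the asymptotics of the Laplace-side object as $q\downarrow 0$, feed that through the explicit solution formula to get the large-$t$ behaviour of $\len(t)$ — this yields \qref{conv3} with $\tilde L$ determined by $L$ through the stated relation \qref{e.len} — and then show that the rescaled transform $\vp_t(q/\len(t))$ converges pointwise to the transform of $F\thetasup$. The self-similar profiles $F\thetasup$ enter here precisely as the fixed points of the rescaled flow, and their characterization (tails $\sim c_\theta x^{-(1+\theta)}$, Theorem~\ref{t.ss}) guarantees the limit is non-degenerate with positive min. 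The main technical care is making sure the slowly varying correction propagates consistently from $x\to\infty$ asymptotics of the data to $t\to\infty$ asymptotics of $\len(t)$; this is where uniform convergence on compacta in Karamata's theory is used.

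**For (i), the necessity direction, I would argue by contradiction / rigidity.** Suppose \qref{conv1} holds for some admissible $\lambda(t)\to\infty$ and non-degenerate $F_*$. The first step is to show the rescaling is essentially forced: since $\len(t)/\lambda(t)$ must converge (to $l_*>0$, the min of $F_*$) — this follows because the min is preserved by the solution map in a controlled way and the weak limit has positive min — we get $\lambda(t)\sim \len(t)/l_*$, giving the first relation in \qref{conv3}. Then, transferring \qref{conv1} to the Laplace side, the explicit formula shows that the limit profile $F_*$ must be a fixed point of the rescaled dynamics, and by the classification of such fixed points (the $F\thetasup$, plus possibly degenerate ones ruled out by positive min) we learn $F_* = F\thetasup$ for some $\theta\in(0,1]$ and that $\len(t)$ is regularly varying of index $1/\theta$. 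Finally, running the explicit formula backwards — the flow is invertible in the relevant regime — converts the regular variation of $\len(t)$ into the regular variation \qref{conv2} of the initial truncated moment, via the converse half of the Tauberian correspondence.

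**The hard part will be** the necessity direction, specifically the rigidity step: showing that \emph{any} non-degenerate scaling limit must be one of the explicit self-similar profiles, with no exotic limits arising from pathological choices of $\lambda(t)$. The explicit formula gives an exact trajectory in transform space, but one must rule out oscillatory or ``non-regularly-varying'' behaviour of $\len(t)$ and of the data's tail — essentially, one needs a converse Tauberian theorem plus an argument that a genuine weak limit along the full sequence $t\to\infty$ forces regular variation (this is the standard ``if a rescaled family converges, the scaling must be regularly varying'' phenomenon, cf.\ the de Haan / Seneta characterizations used for the analogous Smoluchowski and LSW results in \cite{MP1,NP1}). Making that rigidity quantitative, in the measure-valued setting and with the time-reparametrization bookkeeping between \qref{intime} and $\len(t)=t$, is the crux; the rest is bookkeeping with Karamata's theorem and the continuity theorem for Laplace transforms.
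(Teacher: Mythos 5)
Your overall strategy is the paper's: pass to the Laplace transform via the explicit solution formula, prove the equivalence between regular variation of the truncated first moment of $F_{t_0}$ and regular variation of $\len(t)$ by Tauberian arguments, get (ii) by dominated convergence in the solution formula, and get the rigidity in (i) from the principle that convergence of a rescaled family forces regular variation (the Feller functional-equation argument you allude to). In outline this matches Section~\ref{sec.rv}.

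The one step that would fail as written is your claimed first step of (i): that $\len(t)/\lambda(t)\to l_*$ follows directly from the weak convergence \qref{conv1} together with ``the min is preserved in a controlled way.'' Weak convergence to $F_*$ only gives $\limsup_{t\to\infty} \len(t)/\lambda(t)\le l_*$ (for a continuity point $x>l_*$, $F_t(\lambda(t)x)\to F_*(x)>0$ forces $\len(t)\le\lambda(t)x$ eventually); the matching lower bound does not follow, since $F_t$ could carry a vanishing amount of mass near its min far below $\lambda(t)l_*$ without disturbing the weak limit. In the paper this relation is an \emph{output}, not an input: one first converts \qref{conv1} into convergence of the Laplace transforms of the rescaled traces, applies the continuity theorem in the form of Theorem~\ref{WT} (via Lemma~\ref{inverseA}) to get $\len(ts)/\lambda(t)\to\len_*(s)$ at continuity points, and only after the rigidity argument identifies $\len_*(s)=s^{1/\theta}$ does the first relation in \qref{conv3} follow. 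Your plan already contains the Laplace-side transfer, so this is repairable along the paper's lines, but the order of deductions matters. A second, smaller omission: the Tauberian link between the tail of $F_{t_0}$ and $\len(t)$ is not a bare Karamata argument; because the min history enters through $\invlen=\exp A$ and $\bar A(q)=-\log\bigl(w\kappa(w)\bigr)$ with $w=1-\Rbar_{t_0}(q)$, the paper needs de Haan's exponential Tauberian theorem (Theorem~\ref{t.deH}), asymptotic inversion via de Bruijn conjugates, and the slow variation of $\kappa$ (Lemma~\ref{L.kappa2}); these are exactly the ingredients that produce the stated relation \qref{e.len} between $L$ and $\tilde L$.
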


A positive function $L$ is slowly varying at infinity if it is
asymptotically flat under rescaling in the sense that $\lim_{x \to
  \infty}L(xy)/L(x)=1$ for every $y>0$. For example, all powers and
iterates of the logarithm are slowly varying at infinity. These are
the class of admissible corrections to the power law 
$x^{1-\theta}$. 

Part (i) of the theorem is an assertion of rigidity of
scaling limits. We assume only that $\lambda(t)$ is measurable, positive and
$\lim_{t \to \infty} \lambda(t) =\infty$. It then follows that the
limits must define self-similar solutions, and $\lambda(t)$ must be
the time scale associated to the self-similar solution, up to a slowly
varying correction. Part (ii) of the theorem, and the sufficient
conditions of Gallay and Mielke, show that the domains of attraction
are determined by the tails of the initial data. As a consequence of
part (i) of the theorem, the condition \qref{conv2} is optimal.

Gallay and Mielke \cite{GM} used a rather delicate Fourier analysis 
to establish the existence for self-similar solutions by studying
their densities.
We will use the proof of part (ii) of the theorem above to simplify 
much of this analysis and extend it to the case when $p_k\ne0$ for
infinitely many $k$. In this we are motivated by a certain resemblance
of the min-driven model to hydrodynamic limits of
what are called {$\Lambda$-coalescents} in probability theory
\cite{BlG}, which are  
clustering processes involving arbitrarily many multiple collisions.
We find a curious fact, namely,  when $\sum p_k k\log k=\infty$ there
is {\em no} self-similar solution with finite mass (first moment). 
Still, the theorem above correctly describes the domains of
attraction. Solutions with finite mass approach the self-similar
solution with $\theta=1$, but this self-similar solution has infinite
mass.

\subsection{Eternal solutions}
Theorem~\ref{thm.domains} is a particular example of the principle
that the asymptotic behavior under rescaling is determined by the tail of the
initial distribution. The dynamics exhibit sensitive dependence
on initial conditions, as arbitrarily small changes in the tail of the
initial data can lead to widely divergent asymptotic behavior. This
indicates a kind of chaos, and it is of interest to find a precise
formulation. A comprehensive analysis of such phenomena for the
solvable cases of Smoluchowski's coagulation equation appears
in~\cite{MP3}. This analysis is guided by an analogy with the
probabilistic notion of {\em infinite divisibility\/}. Let us first
describe these results informally. 

Clustering is an irreversible process, and in general we do not expect
to be able to solve \qref{evol1} backwards in time (for $t<t_0$). However, the
self-similar solutions have the remarkable feature that they are
defined for all $t >0$. That is, they are {\em divisible\/} under
the coalescence process. We call a solution {\em eternal\/} if it is
defined on the maximal interval $(0,\infty)$ consistent with
\qref{intime}. In probability theory,
the infinitely divisible distributions are characterized by the
celebrated \LK\/ formula.  In~\cite{MP3} we extended a result of
Bertoin~\cite{B_eternal} showing that the class of eternal solutions
to Smoluchowski's coagulation equations is also characterized by a
\LK\/ formula.  Heuristically, this formula describes the emergence of
eternal solutions from infinitesimally small clusters at $t=0$.
We also showed that the set of all subsequential limits---the
scaling attractor---is in a one to one correspondence with the
eternal solutions. A rigorous description of chaos is based on the
fact that nonlinear dynamics on the scaling attractor is reduced to 
linear scaling dynamics using the \LK\/ formula.

In this article, we take the first step towards establishing a similar
picture for min-driven clustering.  Namely, we prove  a \LK\/
formula characterizing all eternal solutions for min driven clustering
(Theorem~\ref{thm:LK}).  The choice of time scale
\qref{intime} is very convenient for this analysis.

\subsection{Outline}
The rest of the article is organized as follows. We describe  
the solution procedure of Gallay and Mielke in the next section and 
discuss how the number-driven time scale is motivated by
the important example of initial data that are monodisperse (a Dirac delta).
The treatment here is formal. We establish
some analytic prerequisites in Section~\ref{sec:prelim}. This is
followed by  rigorous results: the proof of well-posedness for
measure-valued solutions (see Theorem~\ref{thm.wp}) in Section~\ref{sec.ds}, 
the study of self-similar profiles in Section~\ref{sec.ss}, the
characterization of domains of attraction in Section~\ref{sec.rv},  
and the characterization of eternal solutions in Section~\ref{sec:lk}. 
\section{The solution formula for min-driven clustering}
\label{sec.sol}
\subsection{The generating function and moment identities}
As in the theory of branching processes, it is convenient to keep 
track of the clustering process with a generating function  
\be
\label{genfn}
Q(z) = \sum_{k=1}^\infty p_k z^k. 
\ee
For example, binary clustering as in  the Allen-Cahn model corresponds
to $Q(z)=z^2$. The  generating function $Q$ is analytic in the unit
disk $\{|z| <1\}$,  and absolutely monotone (that is, $Q$ and all its
derivatives are positive on $[0,1)$).
We assume that the expected number of clusters in the mergers, denoted $Q_1$,  
is finite. That is, 
\be
\label{genfn2}
Q_1= Q'(1)= \sum_{k=1}^\infty k p_k < \infty.
\ee
We define a convolution operator $\Q(\rho) = \sum_{k=1}^\infty p_k 
\rho_t^{\star k}$ associated to $Q$, and rewrite
\qref{evol2a} in the form 
\be
\label{evol2}
\partial_t f(t,x)= f(t,l) \dot{l}\left( \Q(\rho_t)(x-\len) - Q_1 \rho_t(x)
\right), \quad x > \len(t).
\ee

We extend the evolution equation \qref{evol2} from densities to
measures as follows. We consider a number measure $\nu_t$ and a probability
measure $F_t$ that are related to the densities (when they exist)
by  
\begin{equation}\label{d.nuF}
\nu_t(dx) =f(t,x) \, dx, \qquad 
F_t(dx) = \frac{\nu_t(dx)}{N(t)} = \rho_t(x)\,dx .
\end{equation}
Let $\Rplus$ denote the interval $[0,\infty)$. If $a\colon \Rplus \to
\C$ is continuous with compact support, then formally
\[ \frac{d}{dt} \int_{\Rplus} a(x) f(t,x) \, dx  = \int_{\len}^\infty
a(x) \partial_t f(t,x) 
dx - a(\len)f(t,\len) \ldot. \]
We substitute for $\partial_t f(t,x)$ using \qref{evol2}, 
\qref{d.nuF} to obtain the moment identity
\ba
\label{moment1}
\lefteqn{ \frac{d}{dt} \int_{\Rplus} a(x) \nu_t(dx) \,= }\\
\nn
&& f(t,\len) \ldot \sum_{k \geq 1}
p_k \int_{\Rplus^k}\left[ a\left(\len + \sum_{i=1}^k y_i \right)- a(\len)
 -\sum_{i=1}^k a(y_i)  \right] \prod_{i=1}^k F_t(dy_i)\, . 
\ea
Some basic properties of the model are obtained by choosing suitable
test functions $a$ in \qref{moment1}. We set $a(x)=x$ to see that
mass is conserved:
\be
\label{mass}
\frac{d}{dt} \int_0^\infty x \, \nu_t(dx) = 0.
\ee
When  $a=1$, we obtain the rate of change of the total number of
clusters,
\be
\label{number}
\dot{N} = - Q_1 f(t,l) \ldot = -N Q_1 \rho_t(l) \ldot.
\ee
We substitute \qref{number} in \qref{evol2} to see that 
$\rho_t$ satisfies
\be
\label{evol3}
\partial_t \rho_t = \rho_t(\len) \ldot \, \Q(\rho_t)(x-l), \quad x >l.
\ee
Similarly, we use
\qref{genfn2}, \qref{moment1} and  \qref{number} to obtain the moment identity
\be
\label{moment2}
 \frac{d}{dt} \int_{\Rplus} a(x) F_t(dx) = \rho_t(\len)\ldot
 \sum_{k=1}^\infty p_k \int_{\Rplus^k} \left[ a\left(\len +
     \sum_{i=1}^k y_i \right) -a(\len)\right] \prod_{i=1}^k
 F_t(dy_i)\, .
\ee
This identity becomes an appropriate weak form of \qref{evolsmol}
after we later impose our choice of time scale $t=1/N$, which means from
\qref{number} that 
\begin{equation}
\rho_t(\len)\ldot = \frac1{Q_1t}.
\label{rholdot}
\end{equation}

\subsection{Gallay and Mielke's solution formula}
\label{subsec:GM}
A remarkable feature of these min-driven clustering models is that the
evolution equation admits an elegant solution via the Fourier (or
Laplace) transform. Our analysis relies heavily on this solution
procedure, due to Gallay and Mielke \cite{GM}.
The main difference with \cite{GM} is that we prefer to use the Laplace
transform, denoted by 
\be
\label{LTdef}
\rhobar_t(q) = \int_{\Rplus} e^{-qx} \rho_t(x) \, dx, \quad q>0. 
\ee
We set $a(x)=e^{-qx}$ in \qref{moment2}  to obtain the ordinary
differential equation
\be
\label{LT}
\partial_t \rhobar_t(q) = -(\rho_t(l)\ldot) \,  e^{-ql}
\left(1-Q(\rhobar_t(q))\right). 
\ee
In order to integrate this equation, we define an analytic function
$\forma$ via 
\be
\label{defPhi}
\forma'(z) = \frac{Q_1}{1-Q(z)}, \quad \forma(0)=0.
\ee
(This definition of $\forma$ differs by the factor $Q_1$ from that used
in~\cite{GM}.) $\forma$ is strictly increasing on $[0,1)$. 
In the case of binary clustering, $Q(z)=z^2$, and the functions $\forma$
and $\forma^{-1}$ are  
\be
\label{binary}
 \forma(z) = \log \left(\frac{1+z}{1-z} \right), \quad
\forma^{-1}(w) = \tanh \frac{w}{2}.
\ee

We substitute \qref{defPhi} in  \qref{LT} to obtain
\be
\label{sol2}
\partial_t \forma(\rhobar_t(q)) = - (Q_1 \rho_t(l) \ldot)\, e^{-ql},
\quad q>0.
\ee
The choice of time scale has played no role in the analysis this
far. Gallay and Mielke  parametrize time by the minimum cluster size,
setting $\ldot =1$. For clarity of notation, we denote this
choice of time scale by $\tau$, reverting to the letter $t$ when we
introduce the number-driven time scale in \qref{intime}. 

With $\len(\tau)=\tau$, the value of $\rho_\tau$ on the free boundary
$x=\tau$ plays  an important role in the solution. 
We use this density to define a measure on $(0,\infty)$ that we call the 
{\em trace measure $A$\/}, with distribution function written
\be
\label{alphadef}
A(\tau) = 
\begin{cases}
\alpha_0+\int_{\tau_0}^\tau Q_1 \rho_s(s) \, ds ,
&\tau\ge\tau_0,\cr
\alpha_0 & \tau<\tau_0.
\end{cases}
\ee 
Here $\tau_0$ denotes the initial time, and $\alpha_0$ is
any convenient constant.
Equation \qref{sol2} may now be rewritten  
\be
\label{sol1}
\partial_\tau 
\forma(\rhobar_\tau(q))
= -e^{-q\tau} \frac{dA}{d\tau}. 
\ee
Fix $\tau_1> \tau_0$.  We integrate \qref{sol1} from $\tau_0$ to
$\tau_1$ to obtain  
\be
\label{sol3}
\forma(\rhobar_{\tau_1}(q)) - \forma(\rhobar_{\tau_0}(q)) 
= - \int_{\tau_0}^{\tau_1} e^{-qs} A(ds). 
\ee
Since $ \rho_\tau$ is supported in $[\tau,\infty)$, we have the
estimate
\[ \rhobar_\tau(q) \leq e^{-q\tau} \int_\tau^\infty \rho_\tau(x) \, dx
= e^{-q\tau}. \]
We now let $\tau_1 \to \infty$ in \qref{sol3} to find the Laplace
transform of $A$ given by 
\be 
\label{sol4}
\alphabar(q)= \int_{\Rplus} e^{-qs}A(ds) = \forma(\rhobar_{\tau_0}).
\ee
Thus, the trace measure $A$ is the inverse Laplace transform of
$\forma(\rhobar_{\tau_0})$ and is determined completely by the initial
data.  

We may now repeat this argument to determine the solution at any time
$\tau > \tau_0$. We replace $\tau_0$ by $\tau$ in \qref{sol3}, let
$\tau_1 \to \infty$, and obtain 
\be
\label{sol5}
\rhobar_\tau(q) = \forma^{-1} \left(\alphabar_\tau(q) \right), 
\ee
where 
\begin{equation}
\label{sol6}
\alphabar_\tau(q) = \int_\tau^\infty e^{-qs} A(ds). 
\end{equation}
We note that $\alphabar_\tau$ is the Laplace transform of the truncated trace
measure  $A_\tau$ satisfying
\be\label{gl1}
A_\tau(ds) = H(s-\tau) A(ds),
\ee
where $H$ is the Heaviside function. 
We may write
\be
\label{gl}
A_\tau(s) = A(s)\vee A(\tau)
\ee
for the distribution function, with the notation $a\vee b=\max(a,b)$.
Therefore, the {\em nonlinear\/}
evolution of $\rho_\tau$ is determined by the {\em linear\/} evolution 
of $A_\tau$. This global linearization underlies the analysis
in~\cite{GM}. 

\subsection{The number-driven time scale and an extended solution formula}
It is natural to try to use the formula \qref{sol5} as a
basis for finding measure-valued solutions when the initial data
$\rho_{\tau_0}$ is replaced by an arbitrary probability distribution.
However, we face two difficulties. The first is that it is not clear 
that \qref{sol5} necessarily {\em defines\/} a measure $\rho_\tau$. 
That is, it is not clear that the right-hand side of \qref{sol5}
is necessarily the Laplace transform of a measure.
The second difficulty is that 
whenever the trace measure $A$ has atoms, 
any solution defined through \qref{sol5} is
discontinuous in time, as is clear from \qref{sol6}.
We overcome the first difficulty by an approximation
argument. This relies on the simple and fundamental fact that a limit
of completely monotone functions is completely monotone. 
We overcome the second difficulty by switching to the number-driven time
scale \qref{intime}. 

Henceforth, the letter $t$ always denotes the number-driven
time scale. The measure-valued solution is
denoted by $\newrho_t$, its Laplace transform by 
\[
\Rbar_t(q) = \int_0^\infty \rme^{-qr}\,\newrho_t(dr),
\]
and the minimum cluster size by $\len(t)$. These are related to
solutions in the time scale $\tau$ by 
\be
\label{newold}
\len(t)=\tau,  \quad\newrho_t(dx)= \rho_\tau(x) \, dx, \quad
\Rbar_t(q)=\rhobar_\tau(q). 
\ee 
We now rewrite the solution formula \qref{sol5} in terms of measures. 
The relation $\len(t)=\tau$,  equation \qref{number} and the
definition of $A$ in \qref{alphadef} imply
\be
\label{dtdtau}
\frac{dt}{t} = Q_1 \rho_\tau(\tau) d\tau = A(d\tau). 
\ee
The differential equation \qref{LT} now takes the form
\be
\label{newsol}
\partial_t \Rbar_t(q) = -\frac{e^{-q\len(t)}}{Q_1t}(1-Q(\Rbar_t(q)).
\ee
If $t_0$ denotes the initial time, we may integrate equation
\qref{dtdtau} to obtain 
\be
\label{defL}
\log \left( \frac{t}{t_0} \right)= A(l(t))  -\alpha_0.
\ee
The change of variables \qref{newold} and \qref{dtdtau} affects the
Laplace transform of $A_\tau$ as follows:
\be
\label{newsol3}
\int_{\tau}^\infty e^{-qr} A(dr) = \int_t^\infty e^{-q\len(s)} \frac{ds}{s}. 
\ee

These calculations yield the following  revised solution procedure:
Given an arbitrary initial probability measure $F_{t_0}$,
the trace measure $A$ is found as in \qref{sol4}
by inverting its Laplace transform, given by 
\be \label{newsol4}
\alphabar(q) = \forma(\Rbar_{t_0}(q)). 
\ee
Next, we determine $\len(t)$ through inverting
\qref{defL}. Once $\len(t)$ is known, 
the solution $\newrho_t$ is determined by inverting 
the Laplace transform given as in \qref{sol5} and \qref{sol6} by 
\be
\label{newsol5}
\Rbar_t(q) = \forma^{-1} \left( \int_t^\infty e^{-q\len(s)}
  \frac{ds}{s}\right), \quad t \geq t_0. 
\ee

The main observation is that working with $\len(t)$ instead of
$A_\tau$ yields an evolution continuous in time.  
Since $\len(t)$ is an increasing function, it has at worst jump 
discontinuities. But \qref{newsol5} shows  that discontinuities in
$\len$ do not affect the continuity in $t$ of $\Rbar_t(q)$, and thus
the continuity of $\newrho_t$ in the weak topology.

\subsection{An example: monodisperse initial data}
Let us illustrate the meaning of the extended solution formula 
in the new time scale with an important example. 
Set $t_0=\tau_0=1$, $Q(z)=z^2$ and $F_1(x) =
\mathbf{1}_{x\geq1}$. That is, initially all clusters have size 1.
Then $\Rbar_1(q)  
=e^{-q}$, and 
\[
\alphabar(q) = \forma(\Rbar_1(q))=  \log(1+ e^{-q}) -
  \log(1-e^{-q}) 
.\]
We differentiate with respect to $q$ and simplify to obtain
\[
-\partial_q \alphabar(q) = \frac{2e^{-q}}{1-e^{-2q}} =  2\left( e^{-q} +
e^{-3q} + e^{-5q} + \ldots.\right)  \]
Since $e^{-kq}$ is the Laplace transform of $\delta_k(dx)$, the
trace is
\be
\label{trace-mono}
A(x) = \sum_{k \leq x, \;k \;\mathrm{odd}} \frac{2}{k}.  
\ee
$A$ has jump discontinuities at the odd integers. We shall work with
the right continuous inverse, so that the minimum cluster size is 
\be
\label{monosol1}
\len(t) = k, \quad t \in [t_{k-2},t_k),  \quad t_k = e^{2\left( 1 +
  \frac{1}{3} + \ldots \frac{1}{k} \right)}, \quad k \;\;\mathrm{ odd},
\ee
with $t_{-1}=t_0=1$. The solution formula \qref{newsol5} now yields
\be
\nn
\forma(\Rbar_t(q)) = \left(\log \frac{t_k}{t}\right) \frac{e^{-kq}}{2} +
\frac{e^{-q(k+2)}}{k+2} + \frac{e^{-q(k+4)}}{k+4} + \ldots, \quad
t \in [t_{k-2}, t_{k}). 
\ee
The solution  has the following interpretation. For $t \in
[t_{k-2},t_k)$, $F_t$ is supported  on the odd integers greater than
or equal to $k$.  The fraction of 
clusters of size $k$ decays continuously to zero over the time interval
$[t_{k-2},t_k)$.  Thus, the number-driven time scale regularizes jumps in
the trace measure by providing a finite time for these jumps to
vanish. A moment's reflection suggests that this is what we should
expect if we approximate the monodisperse data by a smooth density.
\section{Analytic preliminaries}
\label{sec:prelim}
This section is a summary of the main analytic methods we use. We
present some facts about distribution functions, Laplace transforms
and Tauberian theorems in a form suitable for use in later sections.

\subsection{Distribution functions}
We will consider measures on an interval $J \subset \R$. We study a
measure through its distribution function, often using the same
notation for both.  It is therefore convenient to introduce 
the following conventions for brevity. 
A distribution function $\fun\colon J \to \R$  is 
a right-continuous, increasing function. 
(``Increasing'' means that $x_1\le x_2$ implies $\fun(x_1)\le \fun(x_2)$.)
A distribution function $\fun$ is identified with a
measure via
\be
\label{measuredist}
\fun\left((x_1,x_2] \right) =\fun(x_2) -\fun(x_1).
\ee
We do not assume that the function $\fun$ is positive, 
since the trace measure for
self-similar solutions is of the form $A(\tau) = \theta \log \tau$,
$\tau >0$.  Following probabilistic convention, we say
a sequence $\fun_n$ of measures on $J$ converges weakly to $\fun$ 
(written $\fun_n\to \fun$) if and only if 
$\fun_n([a,b])\to \fun([a,b])$ as $n\to\infty$ whenever $a,b\in J$
are not atoms of $\fun$, meaning $\fun(\{a\})=\fun(\{b\})=0$.
We have $\fun_n\to \fun$ if and only if
at every point of continuity of $\fun(x)$,
\[
\fun_n(x) + c_n \to \fun(x) \quad\mbox{as $n\to\infty$},
\]
for some constants $c_n$ independent of $x$.

Given a distribution function $f\colon J\to\R$, its epigraph is 
the set
\[
\Gamma(f) = \{(x,y)\in\R^2\mid f(x^-)\le y\le f(x), x\in J\}.
\]
There is a unique distribution function $\funinv$, 
with epigraph obtained by reflection through $x=y$:
$\Gamma(\funinv)=\{(x,y)\mid(y,x)\in\Gamma(\fun)\}.$
We call $\funinv$ the {\em inverse} of $\fun$.
We can write
\be
\label{finv}
\funinv(\tau) = \inf\left\{ t \in J \left| \fun(t) > \tau
 \right. \right\}, 
\quad \tau<\sup\{ \fun(t)\mid t\in J\}.
\ee
The following convergence result is not difficult to prove.
\begin{lemma}
\label{inverseA}
Suppose $\fun_n\colon J \to \R$ is a sequence of distribution functions that
converges to $\fun$ at all points of continuity. Then 
for every point of continuity $x$ of $\funinv$, 
$\funinv_n(x)$ is defined for sufficiently large $n$, 
and $\lim_{n \to \infty} \funinv_n(x) = \funinv(x)$.
\end{lemma}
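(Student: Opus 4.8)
The plan is a sandwich argument built directly on the formula \qref{finv} for the inverse, the only care being to evaluate the approximating functions $f_n$ at points where the hypothesis applies, i.e. at continuity points of $f$. Fix a point of continuity $x$ of $f^\dag$ and set $y:=f^\dag(x)$, which is finite because $x$ lies in the domain of $f^\dag$ (so $\{t\in J: f(t)>x\}$ is nonempty). Given $\varepsilon>0$, the first step is to choose points $a<y<b$ in $J$ with $y-a<\varepsilon$ and $b-y<\varepsilon$ at which $f$ is continuous; this is possible since $f$, being increasing, has at most countably many discontinuities, so its continuity points are dense in $J$. If $y$ happens to be the left endpoint of $J$, the lower half of the argument below is vacuous because then $f_n^\dag(x)\ge y$ automatically, and an analogous remark handles a right endpoint, so I will treat $y$ as interior.

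For the upper bound: since $b>y=\inf\{t\in J: f(t)>x\}$, there is some $t_0\in J$ with $t_0<b$ and $f(t_0)>x$, hence $f(b)\ge f(t_0)>x$ by monotonicity. As $b$ is a continuity point of $f$, the hypothesis gives $f_n(b)\to f(b)$, so $f_n(b)>x$ for all large $n$; in particular $f_n^\dag(x)$ is finite for such $n$ and $f_n^\dag(x)\le b<y+\varepsilon$. For the lower bound the crucial claim is the \emph{strict} inequality $f(a)<x$. We certainly have $f(a)\le x$, since $a<f^\dag(x)$ forces $a\notin\{t: f(t)>x\}$. If $f(a)$ were equal to $x$, then $f(a)>x'$ for every $x'<x$, so $f^\dag(x')\le a$ for all such $x'$, whence $\lim_{x'\uparrow x}f^\dag(x')\le a<y=f^\dag(x)$, contradicting the continuity of $f^\dag$ at $x$. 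Granting the claim, continuity of $f$ at $a$ gives $f_n(a)\to f(a)<x$, so $f_n(a)<x$ for all large $n$; by monotonicity $f_n(t)<x$ for every $t\le a$, and therefore $f_n^\dag(x)\ge a>y-\varepsilon$. Combining the two bounds, $|f_n^\dag(x)-y|<\varepsilon$ for all sufficiently large $n$, and since $\varepsilon$ is arbitrary we conclude $f_n^\dag(x)\to f^\dag(x)$.

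The one genuinely substantive point is the claim $f(a)<x$: this is exactly where the hypothesis that $x$ be a continuity point of $f^\dag$ (not merely a point of the domain) is used, and it is where the statement would fail if $f$ had a flat spot at level $x$. Everything else is bookkeeping with monotonicity and the definition of inf. I would remark in passing that the correspondence $f\mapsto f^\dag$ is an involution on distribution functions (reflection of epigraphs through the diagonal), so the lemma is in fact equivalent to its own converse; I do not, however, see a way to exploit this symmetry that bypasses the estimates above, so I expect the argument to run essentially as sketched.
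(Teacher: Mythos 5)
The paper gives no proof of this lemma at all (it is dismissed as ``not difficult to prove''), so there is no argument of the authors' to compare against; your proof supplies exactly the standard sandwich argument one would expect. The substance is correct: the choice of continuity points $a<y<b$ of $f$, the upper bound via $f_n(b)\to f(b)>x$, and above all the strict inequality $f(a)<x$, which you correctly derive from continuity of $f^\dag$ at $x$ (if $f(a)=x$ then $f^\dag(x')\le a$ for all $x'<x$, forcing $f^\dag(x^-)\le a<f^\dag(x)$), are precisely the points that need care, and you handle them properly. One caveat: your parenthetical claim that ``an analogous remark handles a right endpoint'' is too quick. If $\sup J<\infty$ and $y=f^\dag(x)=\sup J$, the definedness of $f^\dag_n(x)$ can genuinely fail (take $f=0$ on $[0,1)$ with $f(1)=1$, $x=1/2$, and $f_n=f$ except $f_n(1)=0.4$: then $\{t: f_n(t)>x\}$ is empty for every $n$), so in that edge case the lemma as literally stated is problematic; this is a defect of the general formulation rather than of your argument, and it never arises in the paper's applications, where $J=[t_0,\infty)$ and the min histories and traces are unbounded, so that continuity points $b>y$ with $f(b)>x$ always exist. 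Apart from flagging that restriction explicitly, your proof is complete.
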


\begin{lemma}
\label{approxincr}
Suppose $\fun \colon J \to \R$ is a distribution
function. There exist monotonically increasing and monotonically
decreasing  sequences of piecewise constant distribution functions
that converge to $f$ at all points of continuity.
\end{lemma}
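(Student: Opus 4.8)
The plan is to discretize the \emph{range} of $\fun$ on a dyadic ladder, which yields step functions without having to choose grids inside $J$. For $m\in\N$ define
\[
\fun_m^-(x)=2^{-m}\lfloor 2^m\fun(x)\rfloor,\qquad
\fun_m^+(x)=\fun_m^-(x)+2^{-m},\qquad x\in J .
\]
First I would check that each $\fun_m^\pm$ is a piecewise constant distribution function. It is increasing because $\fun$ is and $t\mapsto\lfloor t\rfloor$ is nondecreasing; it is right-continuous because $\fun$ is right-continuous and $t\mapsto\lfloor t\rfloor$ is right-continuous, and adding the constant $2^{-m}$ does not spoil this. Finally, by monotonicity and right-continuity of $\fun$, each level set $\{x\in J: \lfloor 2^m\fun(x)\rfloor=k\}=\{x\in J: k2^{-m}\le\fun(x)<(k+1)2^{-m}\}$ is an interval of the form $[c_k,c_{k+1})\cap J$, so $\fun_m^\pm$ is constant on each of these at most countably many intervals and takes values in the discrete set $2^{-m}\Z$.

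Next I would establish the two monotonicities and the sandwich $\fun_m^-\le\fun\le\fun_m^+$. The sandwich is immediate from $\lfloor t\rfloor\le t<\lfloor t\rfloor+1$. For the behaviour in $m$ I would use the binary-refinement identity: writing $2^m\fun(x)=\lfloor 2^m\fun(x)\rfloor+\theta$ with $\theta\in[0,1)$ gives $\lfloor 2^{m+1}\fun(x)\rfloor=2\lfloor 2^m\fun(x)\rfloor+\eps$, where $\eps=\lfloor 2\theta\rfloor\in\{0,1\}$. Dividing by $2^{m+1}$ yields $\fun_{m+1}^-(x)=\fun_m^-(x)+2^{-(m+1)}\eps\ge\fun_m^-(x)$, so $(\fun_m^-)_m$ is increasing; and $\fun_{m+1}^+(x)=\fun_m^-(x)+2^{-(m+1)}(\eps+1)\le\fun_m^-(x)+2^{-m}=\fun_m^+(x)$ since $\eps+1\le 2$, so $(\fun_m^+)_m$ is decreasing.

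Since $2^{-m}\lfloor 2^m t\rfloor\to t$ for every $t\in\R$, we conclude that $\fun_m^-(x)\to\fun(x)$ and $\fun_m^+(x)=\fun_m^-(x)+2^{-m}\to\fun(x)$ for \emph{every} $x\in J$, which is stronger than convergence at points of continuity; this finishes the proof. There is no serious obstacle, but one point deserves care: one must \emph{not} define the upper approximant by $2^{-m}\lceil 2^m\fun(x)\rceil$, since $t\mapsto\lceil t\rceil$ is left- rather than right-continuous, and the resulting function would fail to be right-continuous at the (generally nonempty) set of $x$ where $2^m\fun(x)\in\Z$; taking $\fun_m^-+2^{-m}$ instead avoids this while keeping the sequence decreasing in $m$, thanks to the bound $\eps+1\le 2$ above. (Alternatively one could floor and ceil $\fun$ to a nested sequence of locally finite grids in $J$, but then the upper approximant must be defined with a \emph{strict} inequality in the defining infimum to stay right-continuous, and the endpoints of $J$ need separate bookkeeping, so the range discretization is cleaner.)
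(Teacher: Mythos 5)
Your proof is correct — it even yields more than the lemma asks (a uniform error bound $2^{-m}$, hence convergence at every point of $J$, not only at continuity points) — but it takes a genuinely different route from the paper. The paper discretizes the \emph{domain}: it sets $g_n(x)=\sum_{k}f(k/n)\one_k(nx)$, i.e.\ it samples $f$ at the left endpoints of the grid cells $[k/n,(k+1)/n)$, obtaining piecewise constant minorants of $f$ that converge at continuity points, and it remarks that the decreasing sequence is obtained analogously. You instead discretize the \emph{range} on the dyadic ladder $2^{-m}\Z$. Your version buys several things: both the increasing and the decreasing sequence come from one formula $f_m^-=2^{-m}\lfloor 2^m f\rfloor$, $f_m^+=f_m^-+2^{-m}$; monotonicity in $m$ is automatic from the binary refinement identity; there is no bookkeeping at the endpoints of a general interval $J$; and you correctly flag the ceiling/right-continuity pitfall. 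It also quietly avoids a rough edge in the paper's sketch: with the grids $k/n$ the claimed chain $g_1\le g_2\le\cdots$ is not literally true for consecutive $n$ (e.g.\ $\lfloor 2x\rfloor/2=1/2>\lfloor 3x\rfloor/3$ at $x=1/2$, so $g_3(1/2)\le g_2(1/2)$ for strictly increasing $f$), and one should pass to nested grids such as $n=2^j$ — exactly the nestedness your dyadic refinement builds in. What the paper's domain sampling buys in return is that the approximants take values of $f$ itself and jump only at prescribed grid points, which lines up directly with the explicit piecewise-constant min-history/trace pairing \qref{lPL}--\qref{tPL} exploited in Proposition~\ref{propCV}; your approximants serve that purpose just as well, since their jump set is locally finite (level sets $[c_k,c_{k+1})\cap J$, finitely many on any compact subinterval because $f$ is locally bounded), and monotone convergence at continuity points is all that argument needs.
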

\begin{proof}
We assume that $J=[0,\infty)$ for clarity, and we only
construct an increasing sequence of approximations. The argument
is easily generalized. Suppose 
$k \geq 1$ is an integer. Let $\one_k$ denote the indicator function
$\one_k(x) = 1$, $k \leq x < k+1$, and $\one_k(x)=0$ otherwise. 
Consider the sequence of functions
\[
   g_n(x) = \sum_{k=0}^\infty f\left(\frac{k}{n}\right)\one_k(nx). 
\]
Since $f$ is increasing, we have 
\[ g_1(x) \leq g_2(x) \leq \ldots \leq g_n(x) \leq f(x). \]

Let us establish convergence of $g_n$.  Given $x \geq 0$
and $n$, let $k_n$ denote the  integer such 
that $k_n \leq nx < k_n+1$. Clearly $\lim_{n \to \infty}
k_n/n=x$. Suppose $x$ is a point of continuity of $f$. Then 
$g_n(x) = f(k_n/n)$, therefore $g_n(x) \to f(x)$. 
\end{proof}

\subsection{Min history and trace}
Our analysis will focus on two related distribution functions---the 
minimum cluster-size history $\len$ and the trace $A$. 

Fix $t_0>0$. A {\em min history} is a 
positive distribution function $\len$ on $[t_0,\infty)$. 
In this article we also require a min history to be unbounded:
$\len(t)\to\infty$ as $t\to\infty$.  Given any min history $\len$,
we associate a {trace} $A$ on $(0,\infty)$ via
\be
\label{defA2}
A(\tau) = \log \invlen (\tau), 
\quad \tau >0.
\ee
Note $A(\tau)=\log t_0$ for $0<\tau <\len(t_0)$ due to \qref{finv}.

Conversely, we say $A$ is a {\em trace} if it is
a distribution function on $(0,\infty)$ such that 
(i) $A(\tau)=\log t_0$ on some nonempty, maximal interval
$(0,\tau_0)$, and (ii) $A(\tau)\to\infty$ as $\tau\to\infty$.  
Given any trace $A$, we can associate a min history $\len$ by 
\be
\label{defL2}
\len(t) = \exp A\iinv(t), 
\quad t \ge t_0.
\ee

\begin{prop}[Change of variables]
\label{propCV}
(a) Assume $\len$ is a min history and there exists $c \geq 0$
such that
\be
\label{lfinite}
\int_{t_0}^\infty e^{-q\len(t)} \frac{dt}{t} < \infty, \quad q\in (c,\infty).
\ee
Then the trace $A$ given by \qref{defA2} satisfies
\be
\label{lAfinite}
\alphabar(q)= \int_0^\infty e^{-q\tau} A(d\tau) =  \int_{t_0}^\infty
e^{-q\len(t)} 
\frac{dt}{t}, \quad q \in (c,\infty).
\ee
(b) Assume $A$ is a trace such that $\alphabar(q) < \infty$
for $q \in (c,\infty)$, and let $\len(t)$ be defined by
\qref{defL2}. Then \qref{lAfinite} holds. 
\end{prop}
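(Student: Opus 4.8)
The plan is to prove both parts by a change-of-variables identity for the Stieltjes integral, combined with the duality $A(\tau) = \log\invlen(\tau)$ and $\len(t) = \exp A\iinv(t)$ from \qref{defA2}--\qref{defL2}. The underlying fact is elementary: if $\phi$ is an increasing right-continuous function and $\phi\iinv$ is its inverse in the sense of \qref{finv}, then for any nonnegative Borel function $h$ one has $\int h(\phi(t))\,d\mu(t) = \int h(\tau)\,d(\mu\circ\phi\iinv)(\tau)$ whenever the pushforward measure is correctly interpreted. Here the relevant base measure is $d\mu(t) = dt/t = d(\log t)$ on $[t_0,\infty)$, and the pushforward of $d(\log t)$ under $t\mapsto\len(t)$ is exactly $A(d\tau)$, since $A(\tau) = \log\invlen(\tau)$ is (up to the constant $\log t_0$ on $(0,\len(t_0))$) the distribution function of that pushforward. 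Taking $h(\tau) = e^{-q\tau}$ then gives \qref{lAfinite} directly.

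For part (a): I would first record that $\len$ being an unbounded min history means $\invlen$ is a genuine increasing function on $[0,\infty)$ with $\invlen(\tau)=t_0$ for $\tau<\len(t_0)$, so $A$ defined by \qref{defA2} is indeed a trace in the sense required (it equals $\log t_0$ on the maximal interval $(0,\len(t_0))$ and tends to $\infty$ because $\invlen(\tau)\to\infty$ as $\tau\to\infty$, which in turn follows from $\len$ being finite-valued everywhere). Then I would justify the change of variables: partition $[t_0,\infty)$ at the jump points of $\len$, or equivalently apply the standard layer-cake / monotone-substitution lemma. A clean route is to verify the identity first for $h = \one_{(a,b]}$, i.e.\ to check
\be
\int_{\{t\,:\,a<\len(t)\le b\}}\frac{dt}{t} = A(b)-A(a) = \log\invlen(b) - \log\invlen(a),
\ee
which is immediate from the definition of $\invlen$ via \qref{finv} once one checks that $\{t : a < \len(t)\le b\} = (\invlen(a),\invlen(b)]$ up to an endpoint set of measure zero under $dt/t$. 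Monotone class / linearity then extends this to $h(\tau)=e^{-q\tau}$, and finiteness for $q\in(c,\infty)$ is exactly hypothesis \qref{lfinite}.

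For part (b): here $A$ is the given object and $\len$ is reconstructed by \qref{defL2}. The key point is that $A$ being a trace guarantees $\len(t)=\exp A\iinv(t)$ is a bona fide min history — positive (since $A\iinv\ge 0$... more precisely $\len(t)\ge\len(t_0)>0$), right-continuous and increasing as the composition/exponential of such, and unbounded because condition (ii) in the definition of a trace forces $A\iinv(t)\to\infty$. One must also check the reciprocity $\invlen = \exp\circ(\text{inverse of }A\iinv) $ agrees with $A$ up to the constant, i.e.\ that applying \qref{defA2} to this $\len$ returns the original $A$; this is the statement that $(\phi\iinv)\iinv = \phi$ for right-continuous increasing $\phi$, modulo the flat piece, which is standard (and is essentially Lemma~\ref{inverseA}'s deterministic analogue). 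Once that reciprocity is in hand, part (b) reduces to part (a) applied to the reconstructed $\len$, and \qref{lAfinite} holds with the same $c$.

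The main obstacle is purely bookkeeping: handling the atoms of $A$ (equivalently, the jumps of $\len$, equivalently the flat stretches of $\invlen$) so that the endpoint conventions in \qref{finv} line up and the set $\{t : a<\len(t)\le b\}$ really is an interval of the asserted form up to a $dt/t$-null set. Right-continuity of all the distribution functions involved, together with the convention \qref{measuredist}, should make every boundary term land on the correct side, but this is the step that needs care rather than cleverness. The finiteness claims in both parts are not obstacles — they are hypotheses, and the change-of-variables identity transfers them verbatim.
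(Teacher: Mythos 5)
Your argument is correct, but it follows a genuinely different route from the paper's. You prove \qref{lAfinite} as an image-measure identity: you identify $A(d\tau)$ with the pushforward of $dt/t$ under $t\mapsto\len(t)$ by checking $\int_{\{a<\len(t)\le b\}}dt/t=\log\invlen(b)-\log\invlen(a)=A(b)-A(a)$ (the level set is $(\invlen(a),\invlen(b)]$ up to at most two points, null for $dt/t$), and then extend from indicators to $h(\tau)=e^{-q\tau}$ by a monotone-class/monotone-limit argument; part (b) is then reduced to part (a) via the reciprocity $(\funinv)\iinv=\fun$, which is indeed immediate from the epigraph definition of the inverse. The paper instead verifies the identity explicitly when $\len$ (equivalently $A$) is piecewise constant, where both integrals reduce to the same series $\sum_k e^{-q\len_k}\log(t_{k+1}/t_k)$, and then treats general data by monotone approximation: $\len_n\dnto\len$ piecewise constant (Lemma~\ref{approxincr}), hence $e^{-q\len_n}\upto e^{-q\len}$ and $A_n\upto A$ by Lemma~\ref{inverseA}, and the monotone convergence theorem applies to both sides; part (b) is handled symmetrically starting from $A_n\upto A$. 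Thus the paper approximates the min history/trace itself, reusing Lemmas~\ref{inverseA} and~\ref{approxincr} (which it needs again for Theorem~\ref{WT}) and never analyzing level sets of $\len$, while your proof approximates the integrand, is self-contained, and delivers the slightly stronger conclusion that the identity holds for any nonnegative Borel integrand, i.e.\ that $A(d\tau)$ literally is the image of $dt/t$ under $\len$. One small correction in your part (b): condition (ii) in the definition of a trace, $A(\tau)\to\infty$, is what makes the reconstructed $\len(t)$ from \qref{defL2} finite for every $t\ge t_0$; the unboundedness of $\len$ comes instead from the finiteness of $A(\tau)$ at each fixed $\tau$. Both facts are immediate, so this swap does not affect the validity of your argument.
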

\begin{proof}
 {\em 1.\/}  We first verify the equality for piecewise constant functions.
Suppose $0< t_0 < t_1 < t_2 < \ldots$ and $0< l_0 < l_1 <
\ldots$ are strictly increasing sequences. Consider the piecewise
constant, increasing function
\be
\label{lPL}
\len(t) = \sum_{k=0}^\infty \len_k \one_{[t_k, t_{k+1})}(t), \quad
t \geq t_0.
\ee
An associated trace is given by
\be
\label{tPL}
A(\tau) = \sum_{k=-1}^\infty \log t_{k+1} \one_{[\len_k, \len_{k+1})}(\tau),
\quad \tau \geq 0. 
\ee 
Here $\len_{-1}=0$. 
Conversely, if $A$ is defined by \qref{tPL}, then $\len$ is given by
\qref{lPL}. We fix $q \in (c,\infty)$, assume \qref{lfinite}, and
use \qref{lPL} and \qref{tPL} to compute the integrals in
\qref{lAfinite}. Both equal
\[ e^{-q\len_0} \log \left(\frac{t_1}{t_0}\right) + e^{-q\len_1} \log
\left(\frac{t_2}{t_1}\right)  + e^{-q\len_2} \log
\left(\frac{t_3}{t_2}\right) + \ldots \]

{\em 2.\/} Suppose $\len$ is given, and \qref{lfinite} holds. Fix $q
>0$. We approximate $\len$ by a decreasing sequence of piecewise constant
functions $\len_n \dnto \len$. Then $e^{-q\len_n} \upto e^{-q\len}$
since $q > c \geq 0$, and moreover we have $A_n \upto A$ by
Lemma~\ref{inverseA}.  By the monotone convergence theorem,
\ba
\nn
\lefteqn{ \int_{t_0}^\infty e^{-q \len(t)} \frac{dt}{t} = \lim_{n \to \infty}
\int_{t_0}^\infty e^{-q\len_n(t)} \frac{dt}{t}}\\
\nn
 && = \lim_{n \to \infty}
\int_0^\infty e^{-q\tau} A_n(d\tau) = \int_0^\infty e^{-q\tau}
A(d\tau). 
\ea

{\em 3.\/} To prove (b), assume $A$ is such that $\alphabar(q) <
\infty$ for $q \in (c, \infty)$. Consider a 
sequence of piecewise constant increasing functions $A_n \upto A$. Then
$\len_n \dnto \len$ and we may apply the monotone convergence theorem again.
\end{proof}

Proposition~\ref{propCV} and Lemma~\ref{inverseA}
allow us to reformulate the classical
equivalence between weak convergence of measures and pointwise
convergence of Laplace transforms. The following theorem is a slight
modification of~\cite[XIII.1.2a]{Feller}. 
\begin{thm}
\label{WT}
Suppose $\len_n$ is a sequence of min histories on
$[t_0,\infty)$ with associated traces $A_n$ that satisfy
\be
\label{uniform}
\sup_n \alphabar_n(q) < \infty, \quad q \in (c, \infty),
\ee
for some $c\ge0$.
Then there is a min history $\len$ such that 
$\len_n(t)\to\len(t)$ as $n\to\infty$ at all points of continuity, 
if and only if there is a trace $A$ associated to $\len$ such that 
$\alphabar_n(q)\to \alphabar(q)$ as $n\to\infty$, for all $q\in
(c,\infty)$.  
\end{thm}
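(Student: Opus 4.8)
The plan is to reduce everything to the classical continuity theorem for Laplace transforms, Proposition~\ref{propCV}, and Lemma~\ref{inverseA}. Recall that a min history $\len_n$ and its associated trace $A_n$ carry the same information: $A_n$ is the inverse (in the epigraph sense) of $\log\len_n$ composed appropriately, via \qref{defA2} and \qref{defL2}. So the statement ``$\len_n(t)\to\len(t)$ at all points of continuity'' and the statement ``$A_n(\tau)\to A(\tau)$ at all points of continuity'' are equivalent by Lemma~\ref{inverseA} applied to the map $\len\mapsto\log\invlen$ and to its inverse $A\mapsto\exp A\iinv$; one must check only that these operations interact correctly with the normalization constants $c_n$ allowed in the definition of weak convergence, but since $A_n(\tau)=\log t_0$ for small $\tau$ for every $n$, the constants are in fact pinned down and no ambiguity arises. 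Thus it suffices to prove: under the uniform bound \qref{uniform}, $A_n\to A$ (weakly, as measures on $(0,\infty)$) for some trace $A$ \emph{if and only if} $\alphabar_n(q)\to\alphabar(q)$ for all $q\in(c,\infty)$, where $\alphabar$ is the Laplace transform of $A$.

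First I would handle the forward direction. Assume $\len_n\to\len$ at all points of continuity. By the equivalence above, $A_n\to A$ weakly, where $A$ is the trace of $\len$. Fix $q>c$. The uniform bound \qref{uniform} gives tightness of the measures $A_n(d\tau)$ weighted by $e^{-q\tau}$ at infinity: for any $\ve>0$ one can choose $R$ with $\int_R^\infty e^{-q'\tau}A_n(d\tau)<\ve$ uniformly in $n$ for some $c<q'<q$, and then $e^{-q\tau}\le e^{-(q-q')R}e^{-q'\tau}$ controls the tail for the exponent $q$. On the bounded part $[0,R]$, weak convergence $A_n\to A$ together with continuity of $\tau\mapsto e^{-q\tau}$ gives $\int_0^R e^{-q\tau}A_n(d\tau)\to\int_0^R e^{-q\tau}A(d\tau)$ (taking $R$ to be a continuity point of $A$). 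Letting $\ve\to0$ yields $\alphabar_n(q)\to\alphabar(q)$. Equivalently, one can invoke Proposition~\ref{propCV} to rewrite $\alphabar_n(q)=\int_{t_0}^\infty e^{-q\len_n(t)}\,dt/t$ and pass to the limit by dominated convergence, dominating by $e^{-q'\len_1(t)}/t$ using that $\len_n\ge\len_1$ is false in general—so the cleaner route is the tightness argument just sketched, or a direct monotone-class reduction to piecewise-constant histories as in the proof of Proposition~\ref{propCV}.

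For the converse, assume $\alphabar_n(q)\to\psi(q)$ for all $q\in(c,\infty)$, where I temporarily write $\psi$ for the limit. The uniform bound \qref{uniform} plus the fact that each $A_n(d\tau)$ is a positive measure on $(0,\infty)$ with $A_n(\tau)=\log t_0$ near $0$ lets us apply Helly's selection theorem: every subsequence of $(A_n)$ has a further subsequence converging weakly on $(0,\infty)$ to some distribution function $A$ (after subtracting the common constant $\log t_0$ there is no escape of mass to $0$, and the uniform Laplace bound prevents escape to $\infty$ in the weighted sense). Along such a subsequence the already-proven forward direction gives $\alphabar_{n_k}(q)\to\alphabar(q)$, so $\alphabar=\psi$; by uniqueness of the Laplace transform the limit $A$ is the same for every convergent subsequence, whence the full sequence $A_n\to A$. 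Finally one must verify that this limit $A$ is a genuine \emph{trace}: property (i), $A(\tau)=\log t_0$ on a maximal interval $(0,\tau_0)$, holds because each $A_n$ has this property and the weak limit inherits the constant value $\log t_0$ near $0$; property (ii), $A(\tau)\to\infty$, holds because the associated $\len_n$ are min histories hence the $\alphabar_n$ are Laplace transforms of \emph{unbounded} traces, and this is preserved under the limit (if $A$ were bounded, $\alphabar(q)\to A(\infty)<\infty$ as $q\downarrow0$ would force a uniform bound contradicting unboundedness of the $A_n$; more carefully, use that $\len_n(t)\to\len(t)$ with $\len$ a bona fide min history, which is automatic once we know the $A_n$ converge). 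Then Lemma~\ref{inverseA} produces the associated min history $\len$ with $\len_n\to\len$ at continuity points, completing the proof.

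The main obstacle is the bookkeeping in passing between the two dual objects $\len_n$ and $A_n$ while keeping track of the additive constants permitted in the definition of weak convergence of (not necessarily positive, not necessarily finite) distribution functions: one must confirm that Lemma~\ref{inverseA} applies with the correct normalization on both directions of the correspondence, and that ``convergence at continuity points'' of $\len_n$ really is equivalent to ``weak convergence'' of $A_n$ as measures on the open half-line, not merely on compact subintervals. Once that dictionary is nailed down, both implications are standard Laplace-transform continuity arguments powered by the uniform bound \qref{uniform} and Helly's theorem.
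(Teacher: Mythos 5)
Your overall reduction is the same as the paper's: translate convergence of the min histories into weak convergence of the traces via Lemma~\ref{inverseA} and \qref{defA2}--\qref{defL2}, then pass between weak convergence of $A_n$ and pointwise convergence of $\alphabar_n$. Where the paper simply invokes the classical extended continuity theorem for Laplace transforms \cite[XIII.1.2a]{Feller} (with \qref{uniform} supplying the boundedness needed in the direction ``weak convergence $\Rightarrow$ transform convergence''), you re-prove it by a truncation argument in one direction and Helly selection plus uniqueness in the other; that part is standard and essentially sound, up to bookkeeping you yourself flag (pinning the additive constant and ruling out escape of mass to $\tau=0$, both of which follow from $A_n(0^+)=\log t_0$ and, for the converse, from letting $q\to\infty$).

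The genuine gap is in your converse. You replace the theorem's hypothesis by the weaker one ``$\alphabar_n(q)\to\psi(q)$ for all $q>c$'' and then claim the Helly limit must be a \emph{trace}, arguing that boundedness of the limit would ``contradict unboundedness of the $A_n$.'' Unboundedness is not preserved under weak limits, and the weakened statement is false: take $\len_n(t)=nt$ on $[t_0,\infty)$. Each $A_n$ is a trace (hence unbounded), \qref{uniform} holds since $\alphabar_n(q)=\int_{t_0}^\infty e^{-qnt}\,dt/t\le\alphabar_1(q)$, and $\alphabar_n(q)\to0$ for every $q>0$; yet $A_n(\tau)\to\log t_0$ for every fixed $\tau$, the limit is not a trace, and $\len_n(t)\to\infty$ for every $t$, so there is no limiting min history. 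The parenthetical fallback (``use that $\len_n\to\len$ with $\len$ a bona fide min history, which is automatic once the $A_n$ converge'') is circular, since that is exactly what is to be proved. The theorem as stated avoids this issue: the converse hypothesis is that the limit of $\alphabar_n$ equals $\alphabar$ for a trace $A$ associated to a min history $\len$ --- precisely the point the paper's proof makes (``where $A$ is a trace, in particular $A(\tau)\to\infty$''). If you use the hypothesis in that form, the flawed verification step is unnecessary: your subsequence argument identifies every Helly limit with the given $A$, and Lemma~\ref{inverseA} then yields $\len_n\to\len$ at continuity points, completing the proof.
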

\begin{proof}
Suppose $\len_n\to\len$.
Lemma~\ref{inverseA} and definition
\qref{defA2} then imply the distribution functions $A_n\to A$.
By the classical criterion
for weak convergence of measures \cite[XIII.1.2a]{Feller}, under the
hypothesis \qref{uniform} it follows 
$\alphabar_n(q)\to\alphabar(q)$
for all $q>c$.

Conversely, suppose $\alphabar_n(q)\to\alphabar(q)$ for all $q>c$,
where $A$ is a trace (in particular $A(\tau)\to\infty$ as
$\tau\to\infty$).  By the classical criterion, 
the distribution functions $A_n\to A$, and Lemma~\ref{inverseA}
yields $\len_n\to\len$
where $\len$ is given by \qref{defL2}.
\end{proof}

\subsection{Regular variation}
A measurable function $\rv : (0,\infty) \to (0,\infty)$ is
{\em regularly varying \/} at infinity with index $\theta \in \R$
(written $\rv \in \RV_\theta$) if 
\be
\label{sv2}
\lim_{\lambda \to \infty} \frac{\rv (\lambda x)}{\rv(\lambda)}
=x^\theta, \quad 
\mathrm{for\;\; every}\;\;x >0.
\ee
In this case, $\rv(x) = x^\theta \sv(x)$ where $\sv$ is slowly varying
at infinity.

The class of regularly varying functions is remarkably rigid. 
For example, there is no need to assume that the limit in \qref{sv2} 
exists for every $x>0$.  If $\rv$ is a positive, measurable function
on the half-line and  $\lim_{\lambda \to \infty} \rv(\lambda
x)/\rv(\lambda)$ exists, is positive and finite for $x$ in a set of positive
measure, then $\fun$ is regularly varying at infinity with some
index $\theta \in \R$. This fundamental rigidity
lemma (see \cite[VIII.8.1]{Feller} and \cite[1.4.1]{Bingham})
 plays a key role in our analysis.

The class $\RV_\theta$ is of fundamental utility in Tauberian
arguments linking a measure $\nu$ on $[0,\infty)$ and its Laplace
transform $\bar{\nu}(q) = \int_0^\infty e^{-qx}\nu(dx)$, see
\cite[XIII.5.2]{Feller}: 
\begin{thm}
\label{HLK}
If $\sv$ is slowly varying at infinity and $0 \leq \theta < \infty$,
then the following are equivalent:
\be
\label{sv4}
\nu(x) \sim x^\theta \sv (x), \quad x \to \infty,
\ee
and
\be
\label{sv5}
\bar{\nu}(q) \sim q^{-\theta} \sv (1/q)\Gamma(1+\theta), \quad
q \to 0. 
\ee
Moreover, this equivalence remains true when we interchange the roles
of the origin and infinity, namely when $x \to 0$ and $q \to \infty$.
\end{thm}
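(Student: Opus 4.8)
The plan is to present this as a direct consequence of the standard Karamata Tauberian theorem, since the statement is a mild repackaging of a classical result; the work lies in (a) checking the normalization constant $\Gamma(1+\theta)$, (b) handling the asymptotic relation $\sim$ rather than a one-sided bound, and (c) noting the symmetry $0\leftrightarrow\infty$. First I would recall the template computation that pins down the constant: if $\nu(x)=x^\theta$ exactly, then $\bar\nu(q)=\int_0^\infty e^{-qx}\,d(x^\theta)$, which after integration by parts equals $\theta\int_0^\infty e^{-qx}x^{\theta-1}\,dx=q^{-\theta}\Gamma(1+\theta)$. This is the source of the factor $\Gamma(1+\theta)$ in \qref{sv5}, and it already proves the ``model case'' $\sv\equiv 1$.

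The forward (Abelian) direction assumes \qref{sv4} and derives \qref{sv5}. I would split $\nu=\nu_0+(\nu-\nu_0)$ where $\nu_0$ has distribution function $x^\theta\sv(x)$ on $[x_1,\infty)$ for a fixed large $x_1$; the tail contribution is handled using the uniform convergence theorem for slowly varying functions (Bingham, Theorem~1.2.1), which upgrades the pointwise statement $\sv(\lambda x)/\sv(\lambda)\to 1$ to local uniformity in $x$, and then dominated convergence applied to $\int_0^\infty e^{-s}\,d\bigl(\nu(s/q)/(q^{-\theta}\sv(1/q))\bigr)$ gives the limit $\Gamma(1+\theta)$. The low-order contribution from $[0,x_1]$ is $O(1)$ and hence negligible against $q^{-\theta}\sv(1/q)\to\infty$ (using $\theta\ge 0$; when $\theta=0$ one uses instead that $\nu(x_1)$ is a fixed constant and $\sv(1/q)$ may tend to a constant or to infinity, but in either case the truncation error can be made relatively small by first sending $q\to 0$ and then $x_1\to\infty$).

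The Tauberian (reverse) direction is the substantive half: from \qref{sv5} one must recover \qref{sv4}. Here I would invoke the Karamata Tauberian theorem in the form stated in \cite[XIII.5.2]{Feller} or \cite[1.7.1]{Bingham}, whose proof rests on the Stone--Weierstrass / method-of-moments argument: the function $e^{-s}$ is approximated in a suitable weighted sense by polynomials in $e^{-s}$, i.e.\ by finite sums $\sum c_j e^{-js}$, whose Laplace-type transforms are controlled by the hypothesis \qref{sv5} evaluated at $q,2q,3q,\dots$ together with the regular-variation relation $\sv(1/(jq))\sim\sv(1/q)$. This yields convergence of $\int e^{-s}\,d(\text{rescaled }\nu)$, and a continuity/monotonicity argument (the distribution function of $\nu$ is increasing, so pointwise control of its Laplace transform forces pointwise control of the function itself at continuity points) converts this into \qref{sv4}. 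The main obstacle, and the only place real care is needed, is precisely this moment-approximation step, because one needs the approximation to be good enough near $s=0$ where $e^{-s}$ is not small; the classical fix is to approximate $e^{-s}$ from above and below by polynomials in $e^{-s}$ on $[0,\infty)$ and let the error $\to 0$, which works because $\theta\ge 0$ guarantees the relevant integrals converge. Finally, the assertion that everything remains valid with the roles of $0$ and $\infty$ exchanged follows formally by the substitution $x\mapsto 1/x$, $q\mapsto 1/q$, which sends $\RV_\theta$ at infinity to $\RV_{-\theta}$ at the origin and reverses the direction of all the limits; no new argument is required.
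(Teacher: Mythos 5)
The paper does not actually prove this statement: Theorem~\ref{HLK} is quoted as the classical Karamata Tauberian theorem with the reference \cite[XIII.5.2]{Feller}, so there is no internal proof to compare against, and citing the classical result (as you partly do for the Tauberian half) is exactly what the paper does. Your reconstruction of the classical argument is broadly sound: the model computation fixing the constant $\Gamma(1+\theta)$ is right, and the ingredients you name for the hard direction (control of the transform at $q,2q,3q,\dots$ via slow variation, Weierstrass approximation by polynomials in $e^{-s}$, monotonicity of $\nu$ to recover pointwise asymptotics at continuity points) are the correct ones. Two sketch-level imprecisions are worth flagging: what one approximates from above and below by polynomials in $e^{-s}$ is the cutoff $\mathbf{1}_{[0,1]}(s)$, not $e^{-s}$ itself (the delicate point being the discontinuity at $s=1$, not $s=0$); and in the Abelian direction the uniform convergence theorem only gives local uniformity, so dominated convergence needs a Potter-type bound $\sv(\lambda x)/\sv(\lambda)\le C\,x^{\pm\epsilon}$ to dominate the integrand near $s=0$ and $s=\infty$. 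A cleaner route, consistent with tools the paper already uses (Theorem~\ref{WT} rests on the continuity theorem \cite[XIII.1.2a]{Feller}), is to apply that extended continuity theorem to the rescaled measures $x\mapsto \nu(\lambda x)/(\lambda^\theta\sv(\lambda))$, whose transforms converge to $q^{-\theta}\Gamma(1+\theta)$, the transform of $x^\theta$; evaluating the limit distribution function at $x=1$ gives \qref{sv4}.

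The one step that is genuinely wrong is the last one: the claim that the $x\to0$, $q\to\infty$ version ``follows formally by the substitution $x\mapsto 1/x$, $q\mapsto 1/q$.'' The Laplace transform does not cooperate with inversion of the variable: if $\tilde\nu$ is the image of $\nu$ under $x\mapsto 1/x$, then $\int_0^\infty e^{-qx}\,\tilde\nu(dx)=\int_0^\infty e^{-q/x}\,\nu(dx)$, which is not $\bar\nu(1/q)$ nor anything usefully related to it, so no change of variables reduces one regime to the other. The correct justification --- and the one Feller gives --- is that the same proof applies verbatim with the rescaling parameter sent to $0^+$ instead of $\infty$ (slow variation now assumed at the origin, transforms examined as $q\to\infty$). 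The conclusion you assert is true, but not for the reason you give, and as written that step would fail.
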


A refinement of this result will prove useful for 
us---de Haan's exponential Tauberian theorem \cite[Thm.
3.9.3]{Bingham}, see \cite{deHaan}.
\begin{thm}\label{t.deH}
$\exp \nu(x)$ is regularly varying at infinity with index $\theta$
if and only if $\exp \bar\nu(q)$ is regularly varying at zero
with index $\theta$.
If either holds, then 
\[
\nu(1/q)-\bar\nu(q) \to \gamma \theta, \quad\mbox{$q\to0$,}
\]
where $\gamma= 0.577215665\ldots$ is the Euler-Mascheroni constant.
\end{thm}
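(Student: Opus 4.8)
The plan is to reduce the statement to the classical Karamata Tauberian theorem (Theorem~\ref{HLK}) applied to exponentials, following the standard route in \cite{Bingham}. First I would record the easy half. Suppose $\exp\nu$ is regularly varying at infinity with index $\theta$, so that $\exp(\nu(\lambda x)-\nu(\lambda))\to x^\theta$ for each $x>0$; taking logarithms, this says $\nu(\lambda x)-\nu(\lambda)\to\theta\log x$, i.e.\ $\nu$ is (in de Haan's terminology) of class $\Pi$ with auxiliary function constantly equal to $\theta$. I would then compute the Laplace transform directly: since $\bar\nu(q)=\int_0^\infty e^{-qx}\,\nu(dx)$ and $\nu$ is increasing, integration by parts gives $\bar\nu(q)=\nu(0^+)+q\int_0^\infty e^{-qx}(\nu(x)-\nu(0^+))\,dx$ (the boundary term at infinity is controlled because $\nu(x)=\theta\log x+o(\log x)$ grows only logarithmically, so the integral converges for every $q>0$ and $c=0$ is admissible). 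Substituting the asymptotics $\nu(x)\sim\theta\log x$ into the integral and using $\int_0^\infty e^{-s}\log s\,ds=-\gamma$ after the change of variables $s=qx$, I expect to get $\bar\nu(q)=\theta\log(1/q)-\gamma\theta+o(1)$ as $q\to0$, which simultaneously shows $\exp\bar\nu$ is regularly varying at zero with index $\theta$ and yields the constant $\gamma\theta$ in the displayed limit.

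For the converse I would run the same computation in reverse, but this is where the genuine Tauberian content sits, so I would instead invoke Theorem~\ref{HLK} as a black box. The trick is to apply the Karamata theorem not to $\nu$ itself but to the measure obtained after subtracting the logarithmic part. Concretely, if $\exp\bar\nu$ is regularly varying at zero with index $\theta$, write $\bar\nu(q)=\theta\log(1/q)+g(q)$ where $g$ is slowly varying in the additive sense ($g(\lambda q)-g(q)\to0$); one then identifies $\theta\log(1/q)$ as the Laplace transform (up to the constant $-\gamma\theta$) of the measure with distribution function $\theta\log x$, and the remainder corresponds via Theorem~\ref{HLK} to a measure whose distribution function is $o(\log x)$. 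Adding back, $\nu(x)=\theta\log x+o(\log x)$, so $\exp\nu$ is regularly varying at infinity with index $\theta$, and the constant in the limit comes out as $\gamma\theta$ as before.

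The main obstacle is the converse direction: subtracting off $\theta\log x$ (respectively $\theta\log(1/q)$) and keeping careful track of the remainder through the Tauberian correspondence is delicate, because the remainder is not itself a monotone quantity and the error terms are only $o(\log)$, not $o(1)$. The cleanest way around this is simply to cite \cite[Thm.~3.9.3]{Bingham} and \cite{deHaan} for the full equivalence and the value of the constant, which is exactly what the statement of the theorem does; since we need only to \emph{use} this result in later sections, a detailed reproduction of de Haan's argument is unnecessary here. I would therefore present the forward direction in full (it is short and self-contained given Theorem~\ref{HLK}) and defer the converse to the cited references, noting that the Euler--Mascheroni constant enters precisely through $\int_0^\infty e^{-s}\log s\,ds=-\gamma$.
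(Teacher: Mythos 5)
The paper does not prove this statement at all: it is quoted verbatim as de Haan's exponential Tauberian theorem with a citation to \cite[Thm.\ 3.9.3]{Bingham} and \cite{deHaan}. So your decision to defer the converse (the genuinely Tauberian half) to those references coincides with what the paper does, and no more is required there. The problem is the half you do attempt to prove. From $\exp\nu\in\RV_\theta$ you correctly get $\nu(\lambda x)-\nu(\lambda)\to\theta\log x$, hence $\nu(x)=\theta\log x+g(x)$ with $g(\lambda x)-g(\lambda)\to0$ and $g(x)=o(\log x)$ — but only $o(\log x)$, not $o(1)$ and not convergent. Substituting the asymptotic equivalent $\nu(x)\sim\theta\log x$ into $q\int_0^\infty e^{-qx}\nu(x)\,dx$ and claiming the error is $o(1)$ is therefore unjustified, and the intermediate conclusion you draw, $\bar\nu(q)=\theta\log(1/q)-\gamma\theta+o(1)$, is false in general: take $\nu(x)=\theta\log x+\log\log x$ for large $x$, so $\exp\nu\in\RV_\theta$ but $\bar\nu(q)-\theta\log(1/q)\to\infty$. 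Indeed that claim is strictly stronger than the theorem itself, whose whole point is that the correct comparison is with $\nu(1/q)$ (which carries the slowly varying correction), not with $\theta\log(1/q)$.

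The repair is to keep $\nu(1/q)$ on board from the start: since $q\int_0^\infty e^{-qx}dx=1$, one writes
\begin{equation*}
\bar\nu(q)-\nu(1/q)=q\int_0^\infty e^{-qx}\bigl(\nu(x)-\nu(1/q)\bigr)\,dx
=\int_0^\infty e^{-s}\bigl(\nu(s/q)-\nu(1/q)\bigr)\,ds ,
\end{equation*}
and then uses the class-$\Pi$ convergence $\nu(s/q)-\nu(1/q)\to\theta\log s$ together with Potter-type bounds on the increments (to dominate the integrand for $s$ near $0$ and $\infty$) to pass to the limit and obtain $\theta\int_0^\infty e^{-s}\log s\,ds=-\gamma\theta$. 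This is where your ingredient $\int_0^\infty e^{-s}\log s\,ds=-\gamma$ correctly enters — applied to the increments of $\nu$, not to $\theta\log x$ alone — and it yields exactly the stated limit $\nu(1/q)-\bar\nu(q)\to\gamma\theta$, from which $\exp\bar\nu(q)\sim e^{-\gamma\theta}\exp\nu(1/q)$ gives the regular variation of $\exp\bar\nu$ at $0$. Your converse sketch (splitting off $\theta\log(1/q)$ and feeding the remainder to Theorem~\ref{HLK}) would face the same difficulty you acknowledge — the remainder is neither monotone nor regularly varying — so for that direction simply citing \cite[Thm.\ 3.9.3]{Bingham}, as the paper does, is the right course.
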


\subsection{Rate of divergence in the solution formula}\label{s.kappa}
For several reasons, we need to study carefully 
the asymptotic behavior of $\forma(z)$ as $z \to 1$.  
For this, it is convenient to define $\kappa(q)$ via
\begin{equation}\label{kappadef}
- \log \kappa(q)= \forma(1-q)+\log q = 
\int_0^{1-q} \left(
\frac{Q_1}{1-Q(z)} - \frac{1}{1-z}\right) \,dz 
\end{equation}
(note $-\log\kappa(q)\ge 0$) and set
\begin{equation}\label{kappanod}
\kappa_0:= \lim_{q \to 0} \kappa(q) \in [0,1]\,.
\end{equation}
With this notation $\kappa_0$ corresponds to the 
number $\kappa$ introduced in \cite{GM}.
For $Q(z)=z^2$ we have $\kappa_0=\frac12$.

We will show that $\kappa_0>0$ if and only if $\sum_{k=1}^{\infty}
(k \log k) p_k < \infty$.  Since in \cite{GM} the function $Q$ is a
polynomial this finiteness condition is always satisfied. 
However, we can characterize self-similar solutions and their
domains of attraction also  in the case $\kappa_0=0$. 

\begin{lemma}\label{L.kappa}
The function $\kappa(q)$ as defined in \eqref{kappadef} satisfies
$\kappa(q) \to 0$ as $q \to 0$ if and only if
$\sum_{k=1}^{\infty} p_k k \log k = \infty$.
\end{lemma}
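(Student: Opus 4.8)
The plan is to analyze the integrand in \eqref{kappadef}, namely
\[
\frac{Q_1}{1-Q(z)} - \frac{1}{1-z},
\]
as $z \to 1^-$, and show that its integral over $[0,1)$ converges if and only if $\sum_k p_k k \log k < \infty$. Since $-\log\kappa(q) = \int_0^{1-q}(\cdots)\,dz$ and $\kappa_0 = \lim_{q\to0}\kappa(q)$, we have $\kappa_0 = 0$ precisely when this integral diverges to $+\infty$ (recall $-\log\kappa(q)\ge0$, so the integrand's singular part is nonnegative near $z=1$). So the whole question reduces to deciding convergence of $\int^1 \bigl(\frac{Q_1}{1-Q(z)} - \frac{1}{1-z}\bigr)\,dz$.

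First I would combine the two fractions over the common denominator $(1-Q(z))(1-z)$, giving numerator $Q_1(1-z) - (1-Q(z))$. Writing $z = 1-u$ with $u\downarrow0$, both $1-z = u$ and $1-Q(z)$ are $O(u)$; in fact $1-Q(1-u) = Q_1 u - \tfrac12 Q''(1) u^2 + \cdots$ when $Q$ is a polynomial, so the numerator $Q_1 u - (1-Q(1-u))$ is $O(u^2)$ and the integrand is bounded — this is why $\kappa_0>0$ trivially in \cite{GM}. The point is that when $\sum k p_k < \infty$ but $\sum k p_k \log k = \infty$, the function $1-Q(z)$ approaches $0$ like $Q_1(1-z)$ only to leading order, with a correction that is \emph{not} $O((1-z)^2)$ but rather of order $(1-z)\cdot(\text{something}\to\infty)$. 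Concretely, I would show
\[
\frac{Q_1}{1-Q(z)} - \frac{1}{1-z} = \frac{Q_1(1-z) - (1-Q(z))}{(1-Q(z))(1-z)} \sim \frac{1}{1-z}\cdot\frac{Q_1(1-z)-(1-Q(z))}{Q_1(1-z)},
\]
so the integrand behaves like $\frac{1}{1-z}\bigl(1 - \frac{1-Q(z)}{Q_1(1-z)}\bigr)$, and the convergence of the integral is governed by the rate at which $\frac{1-Q(z)}{Q_1(1-z)} \to 1$.

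The key computation is then to express $1 - \frac{1-Q(z)}{Q_1(1-z)}$ in terms of the $p_k$. Using $1-Q(z) = \sum_k p_k(1-z^k)$ and $1-z^k = (1-z)(1+z+\cdots+z^{k-1})$, we get
\[
\frac{1-Q(z)}{1-z} = \sum_{k\ge1} p_k\,(1+z+\cdots+z^{k-1}) = \sum_{k\ge1} p_k\,\frac{1-z^k}{1-z},
\]
so
\[
Q_1 - \frac{1-Q(z)}{1-z} = \sum_{k\ge1} p_k\bigl(k - (1+z+\cdots+z^{k-1})\bigr) = \sum_{k\ge1} p_k \sum_{j=0}^{k-1}(1-z^j).
\]
Hence the integrand equals $\frac{1}{Q_1}\cdot\frac{1}{1-z}\sum_k p_k\sum_{j=0}^{k-1}(1-z^j)$, and
\[
-\log\kappa(q) + \log(\text{bounded}) = \frac{1}{Q_1}\int_0^{1-q}\frac{1}{1-z}\sum_{k\ge1} p_k\sum_{j=0}^{k-1}(1-z^j)\,dz
\]
up to a bounded remainder from the denominator $(1-Q(z))$ versus $Q_1(1-z)$ (which I would control by noting $1-Q(z) \ge c(1-z)$ for $z$ near $1$, with $c$ close to $Q_1$). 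Swapping sum and integral (everything nonnegative, by Tonelli), the inner integral $\int_0^1 \frac{1-z^j}{1-z}\,dz = \int_0^1(1+z+\cdots+z^{j-1})\,dz = \sum_{i=1}^{j}\frac{1}{i} = H_j \sim \log j$. Therefore
\[
-\log\kappa_0 \asymp \sum_{k\ge1} p_k \sum_{j=1}^{k-1} H_j \asymp \sum_{k\ge1} p_k \sum_{j=1}^{k-1}\log j \asymp \sum_{k\ge1} p_k\, k\log k,
\]
using $\sum_{j=1}^{k-1}\log j = \log((k-1)!) \sim k\log k$ by Stirling. This gives $\kappa_0 = 0 \iff \sum_k p_k k\log k = \infty$, as claimed.

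I expect the main obstacle to be \emph{bookkeeping of the error terms}, not the main asymptotic: one must justify that replacing the denominator $(1-Q(z))(1-z)$ by $Q_1(1-z)^2$ changes the integral only by a bounded amount (uniformly as $q\to0$), and that the truncation of the $z$-integral at $1-q$ rather than $1$ is harmless in the limit. Both follow from monotone/dominated convergence once one has the lower bound $1-Q(z)\ge c(1-z)$ near $z=1$ (any $c < Q_1$ works for $z$ close enough to $1$, since $(1-Q(z))/(1-z)\to Q_1$), plus the observation that $\sum_k p_k\sum_{j<k}(1-z^j)$ is increasing in $z$ and converges monotonically. The interchange of summation and integration is the other point requiring a word of justification, but all terms are nonnegative so Tonelli applies directly.
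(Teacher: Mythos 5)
Your proposal is correct and follows essentially the same route as the paper's proof: the same identity $Q_1-\frac{1-Q(z)}{1-z}=\sum_{k}p_k\sum_{j=1}^{k-1}(1-z^j)$, the same reduction to convergence of $\int^1 \frac{R(z)}{1-z}\,dz$ by comparing $1-Q(z)$ with $Q_1(1-z)$ near $z=1$, and the same inner-integral computation giving $\sum_{j<k}H_j\sim k\log k$. One minor remark: the error incurred by replacing the denominator need not be additively bounded as $q\to0$, but the two-sided bound $c(1-z)\le 1-Q(z)\le Q_1(1-z)$ near $z=1$ that you invoke gives multiplicative comparability of the integrals, which is all the if-and-only-if statement requires.
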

\begin{proof}
We compute
\be\label{e.kap1}
R(z):=Q_1-\frac{1-Q(z)}{1-z} = 
\sum_{k=1}^\infty p_k 
\left( k -\frac{1-z^k}{1-z} \right) = 
\sum_{k=1}^\infty p_k  \sum_{j=1}^{k-1} (1-z^j).
\ee
Note $R(1)=0$ and the integrand in \qref{kappadef} is
\be\label{e.kap2}
\frac{R(z)}{1-Q(z)} = 
\frac{R(z)}{1-z}\cdot\frac1{Q_1-R(z)} = \frac{R(z)}{1-z}\cdot\frac1{Q_1+o(1)}
\ee
as $z\to1^-$. Thus it suffices to show $\int_0^1R(z)\,dz/(1-z)<\infty$
if and only if $\sum p_kk\log k<\infty$. We observe
\be
\int_0^1 \sum_{j=1}^{k-1} \frac{1-z^j}{1-z}\,dz = 
\sum_{j=1}^{k-1} \sum_{l=1}^j \frac1l
= \sum_{l=1}^{k-1} \sum_{j=l}^{k-1} \frac1l
= \sum_{l=1}^{k-1} \frac{k-l}l \sim k\log k
\ee
as $k\to\infty$, whence the desired result follows from \qref{e.kap1}.
  \end{proof}

\begin{lemma}\label{L.kappa2}
The function $\kappa(q)$ is slowly varying as $q \to 0$. 
\end{lemma}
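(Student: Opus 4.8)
The plan is to show that the function $-\log\kappa(q)$ is slowly varying as $q\to0$, which is clearly equivalent to $\kappa$ being slowly varying there. From \eqref{kappadef} we have the representation
\[
-\log\kappa(q) = \int_0^{1-q}\frac{R(z)}{1-Q(z)}\,dz,
\]
where $R(z)=Q_1-(1-Q(z))/(1-z)$ as in \eqref{e.kap1}. The key structural fact, already exhibited in the proof of Lemma~\ref{L.kappa}, is that near $z=1$ the integrand behaves like $\psi(z)/(1-z)$ where $\psi(z):=R(z)/(Q_1-R(z))\to0$ as $z\to1^-$. So after the substitution $u=1-z$ we are looking at $\int_q^1 \psi(1-u)\,du/u$ (plus a bounded piece coming from the portion of the integral over $z\in[0,1/2]$, say, which is harmless). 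Thus it suffices to prove: if $h(u)\to0$ as $u\to0^+$ and $h$ is bounded on $(0,1]$, then $u\mapsto\int_u^1 h(s)\,ds/s$ is slowly varying as $u\to0$.

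I would prove this elementary fact directly from the definition. Write $\ell(u)=\int_u^1 h(s)\,ds/s$. For fixed $y>0$ consider $\ell(yu)-\ell(u) = -\int_{u}^{yu} h(s)\,ds/s$ (reading the integral with the appropriate sign when $y<1$); since $|h(s)|\le\varepsilon$ for $s$ below some threshold, for $u$ small enough this difference is bounded in absolute value by $\varepsilon|\log y|$, hence $\ell(yu)-\ell(u)\to0$ as $u\to0$. Separately, $\ell(u)\to\infty$ as $u\to0$ precisely when $-\log\kappa(q)\to\infty$, i.e. in the case $\kappa_0=0$; when $\kappa_0>0$ the function $-\log\kappa(q)$ has a finite positive limit and is trivially slowly varying. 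In the divergent case, $\ell(u)\to\infty$ together with $\ell(yu)-\ell(u)\to0$ gives $\ell(yu)/\ell(u)\to1$, which is exactly slow variation at $0$. One should note that the limit in \eqref{kappadef} guarantees the integrand is nonnegative, so $-\log\kappa$ is monotone and the limit $\kappa_0\in[0,1]$ of \eqref{kappanod} genuinely exists, putting us in one of these two cases.

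The only mild subtlety — and the main thing to get right rather than a genuine obstacle — is handling the contribution of the integrand away from $z=1$, and making sure the error term $Q_1+o(1)$ in \eqref{e.kap2} does not disturb the argument: one writes $1/(Q_1-R(z)) = 1/Q_1 + r(z)$ with $r(z)=O(R(z))$ bounded and tending to $0$, so the extra term contributes $\int_0^{1-q} R(z)r(z)\,dz/(1-z)$, which by the same estimate $R(z)/(1-z)\le$ (bounded, $\to0$) is itself a slowly varying (indeed, either bounded or slowly divergent) quantity, and a sum of a slowly varying function and a bounded function is slowly varying. Assembling these pieces yields that $-\log\kappa(q)$, hence $\kappa(q)$, is slowly varying as $q\to0$.
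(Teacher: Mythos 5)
Your central estimate is the right one, and it is essentially the same computation as the paper's: the paper shows $-\log\bigl(\kappa(\lambda q)/\kappa(q)\bigr)=\int_{1-\lambda q}^{1-q} R(z)\,dz/(1-Q(z))\to 0$ by expanding $R$ as in \eqref{e.kap1} and applying dominated convergence, with $\log\lambda$ as the uniform bound on each term; you reach the same bound more directly by writing the integrand as $h(1-z)/(1-z)$ with $h$ bounded and $h(u)\to0$, and estimating the integral over $[1-\lambda q,1-q]$ by $\sup_{u\le\lambda q}h(u)\cdot\log\lambda$. In fact your reduction is exact rather than approximate: since $1-Q(z)=(1-z)\bigl(Q_1-R(z)\bigr)$, the integrand in \eqref{kappadef} equals $\psi(z)/(1-z)$ with $\psi=R/(Q_1-R)$, and $Q_1-R(z)\ge 1$, so the entire discussion of the remainder $r(z)$ --- including the auxiliary claim that a slowly varying function plus a bounded function is slowly varying, which is false in general --- can simply be deleted.

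The genuine problem is the logical frame you wrap around this estimate. Slow variation of $-\log\kappa(q)$ is \emph{not} equivalent to slow variation of $\kappa(q)$: for $\kappa(q)=q$ one has $-\log\kappa(q)=\log(1/q)$, which is slowly varying at $0$, while $\kappa$ is regularly varying with index $1$ --- precisely the behavior the lemma must exclude. Consequently, in the case $\kappa_0=0$ your final step, deducing $\ell(yu)/\ell(u)\to1$ from $\ell\to\infty$ and $\ell(yu)-\ell(u)\to0$, establishes slow variation of $-\log\kappa$, which is strictly weaker than (and does not imply) the lemma. What is actually needed --- and what your key estimate already gives --- is the additive statement $\log\kappa(\lambda q)-\log\kappa(q)\to0$ for each fixed $\lambda$, i.e.\ $\bigl|\log\kappa(\lambda q)-\log\kappa(q)\bigr|\le\sup_{u\le\lambda q}h(u)\,\log\lambda\to0$; exponentiating yields $\kappa(\lambda q)/\kappa(q)\to1$ directly, with no case distinction between $\kappa_0>0$ and $\kappa_0=0$. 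So the proof is repaired by a one-line change (conclude from the difference estimate, not from slow variation of $\ell$), but as written the conclusion you draw in the main case is not the statement of the lemma.
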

\begin{proof}
We need to show that $- \log (\kappa(\lambda q)/{\kappa(q)}) \to
0$ as $q \to 0$ for any $\lambda >0$.
From \qref{e.kap1} and \qref{e.kap2} above, we see
that if we replace the integrand in \qref{kappadef} by $R(z)/(1-z)$, then in
integrating from $1-\lambda q$ to
$1-q$ the error is only $o(1)$ as $q \to 0$.
Hence by \qref{e.kap1},
\[
- \log \frac{\kappa(\lambda q)}{\kappa(q)}  = \sum_{k=1}^{\infty} p_k
k \sum_{l=1}^{k-1} \Big(\frac{ (1-\lambda q)^l}{l} 
- \frac{(1-q)^l}{l}\Big) + o(1)
\]
as $q \to 0$.
We consider without loss of generality $\lambda >1$ and estimate
\[
 \sum_{l=1}^{k-1} \frac{ (1-\lambda q)^l}{l}
 - \frac{(1-q)^l}{l} = \int_{1-\lambda q}^{1-q} \sum_{l=1}^{k-1}
 x^l\,dx = \int_{1-\lambda q}^{1-q} \frac{1-x^{k-1}}{1-x} 
 \,dx \leq \log \lambda \,.
 \]
 On the other hand
 \[
  \sum_{l=1}^{k-1} \frac{ (1-\lambda q)^l}{l}
   - \frac{(1-q)^l}{l} \to 0
   \]
as $q \to 0$ for any $k$  and thus the claim follows from the
dominated convergence theorem.  
\end{proof}
\section{Well posedness for measures}
\label{sec.ds}
In this section, we first work with the min-driven time scale
$\len(\tau)=\tau$ used by Gallay and Mielke. We prove that a
continuous probability density with min
$\tau_0>0$ defines a solution to \qref{evol3}. This is a weaker
form of the well-posedness theorem of~\cite{GM}. It is included for its
simplicity, and because it is the basis for weak solutions. We then
switch to the number-driven time 
scale \qref{intime} and use the moment identity \qref{moment2} to show
that \qref{evol3} defines a continuous dynamical system on
the space of probability measures $\PPE$.

\subsection{Classical solutions}
The solution formulas of section~\ref{subsec:GM}, while explicit, are
not immediately suited for the construction of solutions. The main
difficulty is to show that positive initial data yields a positive
solution. We construct solutions by rewriting \qref{evol3} in integral form
\ba
\label{intrho}
\rho_\tau(x) & = & \rhoinit(x) + \int_{\tau_0}^\tau \rho_s(s)
\Q(\rho_s) (x-s) \,  ds, \quad x > \tau, \\
\nn
\rho_\tau(x) & = & 0, \quad x < \tau. 
\ea
We then have 
\begin{thm}
Suppose $\rhoinit$ is a continuous probability density with positive
min $\tau_0$. There exists a unique solution to \qref{intrho} on
$[\tau_0, \infty)$ such that $\rho_{\tau_0}=\rhoinit$ and the
solution has the following properties. 
\begin{enumerate}
\item[(a)] For every $\tau \geq \tau_0$,  $\rho_\tau$ is a continuous
  probability density with min $\tau$. 
\item[(b)] The solution formula \qref{sol5} holds for every $q \geq 0$,
  $\tau \geq \tau_0$. 
\end{enumerate}
\end{thm}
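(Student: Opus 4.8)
The plan is to construct the solution to the integral equation \qref{intrho} by a standard Picard--Banach fixed-point argument carried out on a sequence of bounded $x$-intervals, then to glue these pieces together and verify the qualitative properties (a) and (b) separately. First I would fix $R>\tau_0$ and work on the set of continuous functions $\rho\colon\{(\tau,x): \tau_0\le\tau\le R,\ \tau\le x\le R\}\to\R$ with $\rho_\tau$ supported in $[\tau,\infty)$, equipped with a weighted sup-norm $\|\rho\|=\sup e^{-\beta\tau}|\rho_\tau(x)|$ for $\beta$ large. The map $\mathcal T$ sending $\rho$ to the right side of \qref{intrho} is well defined on this set because the convolution $\Q(\rho_s)(x-s)=\sum_k p_k\rho_s^{\star k}(x-s)$ only involves the values of $\rho_s$ on $[s,x-s]\subset[\tau_0,R]$, and the series converges uniformly thanks to \qref{genfn2} and the a priori bound $\int\rho_s=1$ (which I would carry as part of the iteration; see below). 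The convolution structure makes the nonlinearity locally Lipschitz in $\rho$ on bounded sets, so for $\beta$ large enough $\mathcal T$ is a contraction, giving a unique fixed point on $[\tau_0,R]$. Letting $R\to\infty$ and using uniqueness to patch gives a solution on all of $[\tau_0,\infty)$.

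Next I would establish the qualitative properties. Positivity of $\rho_\tau$ on $(\tau,\infty)$ is immediate from \qref{intrho}: $\rhoinit\ge0$ and the integrand $\rho_s(s)\Q(\rho_s)(x-s)$ is a nonnegative combination of convolutions of nonnegative functions, provided $\rho_s\ge0$ for $s<\tau$ — so positivity propagates forward in $\tau$ and is preserved under the iteration, which is why I would build it into the fixed-point space from the start. Continuity of $(\tau,x)\mapsto\rho_\tau(x)$ and the fact that $\mathrm{supp}\,\rho_\tau=[\tau,\infty)$ (min exactly $\tau$, not larger) follow by inspecting \qref{intrho}: near $x=\tau^+$ the integral term is $o(1)$ while $\rhoinit(x)>0$ for $x$ near $\tau_0$, and for general $\tau$ one checks $\rho_\tau(\tau^+)>0$ because $\Q(\rho_s)(x-s)\to\rho_s(0^+)\cdot(\text{something})$... more carefully, because $p_1\rho_s(x-s)$ contributes and $\rho_s$ is continuous and positive just above its min. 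To show $\rho_\tau$ is a \emph{probability} density — i.e.\ total mass stays $1$ — I would integrate \qref{intrho} in $x$ over $(\tau,\infty)$: the loss of mass $a(\tau)\rho_\tau(\tau)$ (from advancing the left endpoint) is exactly balanced against the $\Q_1$-weighted gain, and since $\int\rho_s^{\star k}=(\int\rho_s)^k$, if $\int\rho_s=1$ for all $s<\tau$ the identity $\frac{d}{d\tau}\int\rho_\tau=\rho_\tau(\tau)(\Q_1\cdot 1-\Q_1)=0$ holds, closing the bootstrap. This is exactly the content of \qref{number} specialized to $\ldot=1$, so it is consistent with the moment identity \qref{moment2}.

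Finally, for (b) I would verify that the constructed $\rho_\tau$ obeys the Laplace-transform formula \qref{sol5}. The cleanest route is to observe that the Laplace transform $\rhobar_\tau(q)=\int e^{-qx}\rho_\tau(x)\,dx$ is finite for all $q\ge0$ (since $\rhobar_\tau(q)\le e^{-q\tau}$, the same estimate used to derive \qref{sol4}), then transform \qref{intrho} in $x$ to get the integral form of the ODE \qref{LT} with $\ldot=1$, namely $\rhobar_\tau(q)=\rhobar_{\tau_0}(q)-\int_{\tau_0}^\tau\rho_s(s)e^{-qs}(1-Q(\rhobar_s(q)))\,ds$; here I use that $\int e^{-qx}\rho_s^{\star k}(x-s)\,dx=e^{-qs}\rhobar_s(q)^k$. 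Because $\rhobar_s(q)<1$ for $q>0$ (as $\rho_s$ is a probability density with min $s>0$), the function $\forma$ of \qref{defPhi} is smooth along the trajectory, and applying the chain rule reproduces \qref{sol2}, hence \qref{sol3}--\qref{sol5} by the same integration performed in the text; the case $q=0$ is trivial since $\rhobar_\tau(0)=1=\forma^{-1}(\forma(1))$ in the appropriate limiting sense, or is simply excluded/handled by continuity.

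\textbf{Main obstacle.} The delicate point is not the contraction mapping — that is routine — but rather running the mass-conservation bootstrap \emph{simultaneously} with the existence iteration: the Lipschitz constant of the nonlinearity, and even the finiteness of $\Q(\rho_s)$, depend on bounds like $\int\rho_s\le C$, which one only controls once one knows mass is (approximately) conserved. I expect to handle this by setting up the fixed-point argument on a ball where $\sup_s\int\rho_s\le 2$ (say), proving the contraction there, and then showing a posteriori that the fixed point actually has $\int\rho_s\equiv1$ via the differential identity above; a short argument shows the ball is invariant under $\mathcal T$ so the fixed point lies in it. The second mildly technical point is controlling the infinite sum $\sum_k p_k\rho_s^{\star k}$ uniformly — Young's inequality gives $\|\rho_s^{\star k}\|_{L^\infty([\tau_0,R])}\le \|\rho_s\|_{L^\infty}\|\rho_s\|_{L^1}^{k-1}\le \|\rho_s\|_{L^\infty}\cdot 2^{k-1}$, which is not summable, so instead I would bound $\rho_s^{\star k}(x-s)$ on the \emph{bounded} region using $\|\rho_s^{\star k}\|_{L^\infty[0,R]}\le \|\rho_s\|_{L^1}^k/( (k-1)!\,)\cdot(\ldots)$ — more simply, note $\rho_s^{\star k}$ is supported on $[ks,\infty)$, so on $x\le R$ only finitely many $k$ (those with $ks\le R$, i.e.\ $k\le R/\tau_0$) contribute, making the sum genuinely finite on each bounded $x$-interval. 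This support observation is what makes the whole local theory work cleanly.
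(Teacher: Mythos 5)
Your local construction is sound, and it is a fixed-point variant of what the paper does more simply: since $\rho_s^{\star k}$ is supported in $[ks,\infty)$, the right-hand side of \qref{intrho} on each strip $x\in[k\tau_0,(k+2)\tau_0)$ involves only values already determined, so the paper just integrates \qref{intrho} explicitly strip by strip (no contraction needed). Both routes rest on the same finite-propagation observation, and your parts of the argument up to local existence, uniqueness and positivity are fine.

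The genuine gap is in the mass-conservation step, which is the real crux of part (a). First, the bootstrap as you state it is circular: ``if $\int\rho_s=1$ for all $s<\tau$ then $\frac{d}{d\tau}\int\rho_\tau=0$'' is not an induction over a continuum; to conclude $m_\tau:=\int\rho_\tau\equiv1$ you need uniqueness/Gronwall for the mass ODE, not just the implication at a single $\tau$. Second, to even write that ODE you must integrate the gain term over all of $(\tau,\infty)$, where the support trick no longer truncates the sum: you need $\sum_k p_k m_s^k=Q(m_s)<\infty$, and on your invariant ball $m_s\le2$ this can fail, since only \qref{prob} is assumed and $Q$ may have radius of convergence exactly $1$ (the paper explicitly allows infinitely many nonzero $p_k$). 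Third, the displayed balance $\rho_\tau(\tau)(Q_1\cdot1-Q_1)$ conflates the number identity \qref{number} with the $\rho$-equation \qref{evol3}; the correct balance is $-\rho_\tau(\tau)+\rho_\tau(\tau)Q(m_\tau)$. The paper's proof sidesteps all of this: it first proves only the sub-probability bound, via $\partial_\tau R_\tau(x)\le\rho_\tau(\tau)\left(Q(R_\tau(x))-1\right)\le0$, hence $R_\tau(\infty)\le1$, which is enough to justify the Laplace-transform manipulations and establish \qref{sol5}; exact conservation is then deduced \emph{from} the solution formula, since $\bar A(0^+)=\forma(\bar R_{\tau_0}(0^+))=\forma(1^-)=\infty$ forces $\bar R_\tau(0^+)=\forma^{-1}(\infty)=1$. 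If you want to keep your physical-space order, you can patch it: prove $m_\tau\le1$ by the monotonicity argument first (this legitimizes the mass ODE), then use convexity of $Q$, i.e.\ $1-Q(m)\le Q_1(1-m)$ on $[0,1]$, and Gronwall to force $m_\tau\equiv1$. Finally, your argument that the min of $\rho_\tau$ is exactly $\tau$ leans on the $p_1\rho_s(x-s)$ contribution, which vanishes in the motivating case $Q(z)=z^2$; strict positivity above the min requires either assuming $\rhoinit>0$ on $(\tau_0,\infty)$ (as the paper does when it invokes this property later) or a separate argument.
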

\begin{proof}
\nwc{\pp}{{\rho}}
\nwc{\reals}{\mathbb{R}}
\nwc{\dd}[1]{{\,{\rm d}#1}}
\nwc{\pa}{\partial}
We sketch a proof of existence similar to the direct approach in
\cite{CP} for $Q(z)=z^2$ using a different time scale.
We fix $\tau_0>0$ and let $\rhoinit$ be given.
Note that since the solution is to satisfy $\pp_\tau(x)=0$ for $x<\tau$,
the convolution term on the right-hand side of \qref{intrho}
will depend only upon values of $\pp_\tau(y)$ for $\tau_0<y<x-\tau\le x-\tau_0$.
In particular, this convolution term vanishes for $x<2\tau_0$.

This means we can construct the solution for
$\tau_0<\tau<2\tau_0$ by an inductive procedure as follows: 
For $\tau_0<\tau\le x<2\tau_0$ we have
$\pp_\tau(x)=\rhoinit(x)$ and in particular $\pp_\tau(\tau)=\rhoinit(\tau)$. 
For $\tau_0<\tau\le 2\tau_0$, successively on strips
$x\in[k\tau_0, (k+2)\tau_0)$, for $k=2,4,\ldots$, by simple integration
in time we can now compute $\pp_\tau(x)$ from \qref{intrho},
where the right-hand side is always known from a previous step.
This determines $\pp_\tau(x)$ for $\tau_0\le \tau\le 2\tau_0$ and all $x$.

To determine the solution globally for all $\tau>\tau_0$,
the idea is to replace $2\tau_0$ by $\tau_0$ and repeat.
But in order to justify this we need to verify that 
$\rho_\tau$ remains integrable and conserves total probability.
In particular we need to justify \qref{LT}.
Let us introduce the distribution function
\begin{equation}\label{Fdef}
R_\tau(x)=\int_0^x\pp_\tau(y)\,dy = \int_\tau^x \pp_\tau(y)\,dy.
\end{equation}
This is the probability that a domain has size $\le x$ at time $t$.
Note that for any two distribution functions $R(x)$, $\hat R(x)$ on
$[0,\infty)$ we have
\[
R\star\hat R(x)=\int_0^x R(x-y)\hat R(dy) \le \int_0^x R(x)\hat
R(dy)=R(x)\hat R(x).
\]
Integrating the convolution term in the integrand of \qref{intrho}, we find
\begin{eqnarray*}
&&\int_0^x \Q(\rho_\tau)(y-\tau)\dd y = 
\sum_{k=1}^\infty p_k R_\tau^{\star k}(x-\tau) \le
\sum_{k=1}^\infty p_k R_\tau(x)^k = Q(R_\tau(x)). 
\end{eqnarray*}
Then it follows
$ \pa_\tau R_\tau(x) \le \pp_\tau(\tau)(Q(R_\tau(x))-1)$,
and since $\pp_\tau(\tau)\ge0$ and $R_\tau(x)\le 1$ initially,
$R_\tau(x)$ is decreasing in $\tau$ for fixed $x$.
It follows $R_\tau(\infty)\le1$, and so the Laplace transform
\[
\bar R_\tau(q)=\int_0^\infty \rme^{-qx} R_\tau(\dd{x}) = 
\int_\tau^\infty \rme^{-qx}\pp_\tau(x)\dd{x}
\]
is well defined and $\le\rme^{-q\tau}$.
Since $\pa_\tau R_\tau(x)$ is continuous in $\tau$ for all $x$, 
$\bar R_\tau(q)$ is $C^1$ in $\tau$ for all $q>0$.
This justifies \qref{LT} and the the computations leading up to 
\qref{sol4} and the solution formula \qref{sol5}.
From \qref{sol4} we deduce that since $\bar R_{\tau_0}(0)=1$,
$\bar A(0^+)=\infty=\bar A_\tau(0^+)$ and then \qref{sol5} yields
$\bar R_\tau(0^+)=1=R_\tau(\infty)$, proving that $\pp_\tau$ is a probability density
for all $\tau$.
\end{proof}

The next lemma provides a uniform estimate for smooth approximations.
\begin{lemma}
Suppose $\rhoinit$ is a continuous probability density with positive
min $\tau_0$ and $\rho_\tau$ is the solution to \qref{intrho}
with initial data $\rhoinit$. Then (with $\alpha_0=0$ in \qref{alphadef}),
\be
\label{Aest}
A(\tau) \leq \frac{Q_1}
{\log 2} 
{\log \left(\frac{2\tau}{\tau_0}\right)}
.
\ee
Consequently, 
\be
\label{Lest}
\frac{\len(t)}{\len(t_0)} \geq   
\frac12 \left( \frac{t}{t_0} \right)^{(\log 2)/Q_1}, \qquad  t\ge t_0.
\ee
\end{lemma}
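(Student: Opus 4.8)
The plan is to prove \qref{Aest} by a dyadic iteration, in the minimum‑size variable, of the elementary fact already used in the proof of the preceding theorem: a cluster of size below $2\sigma$ cannot be produced by merging clusters that all have size at least $\sigma$. The bound \qref{Lest} will then be read off from \qref{Aest} by inverting the change of time scale \qref{defL}.

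With $\alpha_0=0$ the trace is $A(\tau)=Q_1\int_{\tau_0}^\tau\rho_s(s)\,ds$ for $\tau\ge\tau_0$ and $A(\tau_0)=0$, so everything rests on the block estimate
\[
A(2\sigma)-A(\sigma)=Q_1\int_\sigma^{2\sigma}\rho_s(s)\,ds\ \le\ Q_1 ,\qquad \sigma\ge\tau_0 .
\]
To prove it, fix $\sigma\ge\tau_0$. By the preceding theorem, for every $s\ge\sigma$ the solution $\rho_s$ is a probability density supported in $[s,\infty)$, hence $\rho_s^{\star k}$ is supported in $[ks,\infty)$; thus for $\sigma<x<2\sigma$ and $s\in[\sigma,x]$ one has $x-s<\sigma\le ks$ for every $k\ge1$ (using $x<2\sigma$, $s\ge\sigma$), so $\Q(\rho_s)(x-s)=0$. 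Splitting the integral in \qref{intrho} at $\tau=\sigma$ gives, for $x>\tau$,
\[
\rho_\tau(x)=\rho_\sigma(x)+\int_\sigma^\tau\rho_s(s)\,\Q(\rho_s)(x-s)\,ds ,
\]
whose integral vanishes whenever $\sigma\le\tau<x<2\sigma$; so $\rho_\tau(x)=\rho_\sigma(x)$ there, and letting $x\downarrow\tau$ yields $\rho_\tau(\tau)=\rho_\sigma(\tau)$ for $\tau\in(\sigma,2\sigma)$. Since $\rho_\sigma$ is a probability density, $\int_\sigma^{2\sigma}\rho_s(s)\,ds=\int_\sigma^{2\sigma}\rho_\sigma(s)\,ds\le1$, which is the block estimate.

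Applying the block estimate with $\sigma=2^j\tau_0$ for $j=0,\dots,n-1$ and summing (using $A(\tau_0)=0$) gives $A(2^n\tau_0)\le nQ_1$. For arbitrary $\tau\ge\tau_0$, choosing $n=\lceil\log_2(\tau/\tau_0)\rceil$ gives $\tau\le2^n\tau_0$ and $n\le\log_2(2\tau/\tau_0)$, so by monotonicity of $A$,
\[
A(\tau)\le A(2^n\tau_0)\le nQ_1\le Q_1\log_2\frac{2\tau}{\tau_0}=\frac{Q_1}{\log 2}\log\frac{2\tau}{\tau_0},
\]
which is \qref{Aest}. Finally, with $\alpha_0=0$ equation \qref{defL} reads $\log(t/t_0)=A(\len(t))$, and $\len(t_0)=\tau_0\le\len(t)$ for $t\ge t_0$; substituting $\tau=\len(t)$ into \qref{Aest}, then exponentiating and solving for $\len(t)$, yields \qref{Lest}.

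The only delicate bookkeeping is in the block estimate: one must keep track that $\rho_s$ has min exactly $s$, so that the supports of the iterated convolutions sit at $ks\ge2\sigma$, and one must read $\rho_\tau(\tau)$ as the boundary value $\rho_\tau(\tau^+)$, so that the identity $\rho_\tau(x)=\rho_\sigma(x)$, valid for $x>\tau$, survives the limit $x\downarrow\tau$. It is worth emphasizing what does \emph{not} produce the stated constant: the naive route of bounding $\alphabar(q)=\forma(\rhobar_{\tau_0}(q))\le\forma(e^{-q\tau_0})$ and using the only universal estimate $\forma(z)\le-Q_1\log(1-z)$ forces, after optimizing over $q$, a constant strictly larger than $Q_1/\log 2$ (it fails already at $\tau=2\tau_0$). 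It is the combinatorial doubling structure of the merging rule, rather than the explicit Laplace‑transform formula, that delivers the clean bound.
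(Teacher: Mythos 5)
Your proof is correct and follows essentially the same route as the paper: a dyadic decomposition of $[\tau_0,\tau)$ combined with the observation that no gain occurs below twice the current minimum size, so $\rho_s(s)=\rho_\sigma(s)$ for $s\in[\sigma,2\sigma]$ and each dyadic block contributes at most $Q_1$ to $A$. Your explicit justification of this frozen-boundary-value fact via the support of $\rho_s^{\star k}$, and the explicit inversion of \qref{defL} to get \qref{Lest}, merely spell out steps the paper leaves implicit.
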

\begin{proof}
If $m$ is an integer such that  $\tau \in [2^{m-1}\tau_0, 2^m \tau_0)$
we divide the domain of integration $[\tau_0,\tau)$ into $m$ pieces
and use the fact 
that $\rho_s(s)= \rho_{r}(s)$ for  $\tau_0\vee\frac12 s\le r\le s$ to obtain 
\ba 
\nn
Q_1\inv A(\tau) &=& 
\int_{\tau_0}^{2\tau_0}\rho_{\tau_0}(s) ds  +
\int_{2\tau_0}^{2^2\tau_0} \rho_{2\tau_0}(s) \, ds + \ldots
\int_{2^{m-1}\tau_0}^\tau \rho_{2^{m-1}\tau_0}(s) \, ds\\
\nn
& 
\leq & m \ \leq \frac{\log (\tau/\tau_0)}{\log 2} +1. 
\ea
\end{proof}

\subsection{Weak solutions}
We now switch to the number-driven time scale $N(t)=1/t$. In order to
define weak solutions, we fix a test function $a$, substitute
\qref{rholdot} in the moment identity \qref{moment2} and integrate in
time between $t_0$ and $t$ to obtain
\ba
\label{newmoment}
\lefteqn{ \int_{\Rplus} a(x)F_t(dx) - \int_{\R_+} a(x) F_{t_0}(dx) =} \\
&&
\nn
\int_{t_0}^t \sum_{k=1}^\infty p_k \int_{\Rplus^k} \left[ a\left(\len(s) +
     \sum_{i=1}^k y_i \right) -a(\len(s))\right] \prod_{i=1}^k
 F_s(dy_i)\, \frac{ds}{Q_1s}.
\ea
We will consider continuous test functions with
$\lim_{x \to \infty} a(x)=0$. Let $C_0(\Rplus)$ denote the space of
such functions with the topology of uniform
convergence. Let $\PPE$ denote the space of probability measures
on $\Rplus$ equipped with the weak topology. Assume $t_0 > 0$ is fixed.

\begin{defn} Let $J\subset (0,\infty)$ be an interval.
We say that a map $F: J \to \PPE$ is a weak solution for
min-driven clustering on $J$ if
\begin{enumerate}
\item The map $t \mapsto \int_{\Rplus} a(x) F_t(dx)$ is measurable 
for every $a \in C_0(\Rplus)$. 
\item The min of $F_t$, denoted $\len(t)$, is positive and increasing.
\item The moment identity \qref{newmoment} holds for each 
$a \in C_0(\Rplus)$ and $t, t_0 \in J $.
\end{enumerate}
\end{defn}

\begin{thm}
\label{thm.wp}
\begin{enumerate}
\item[(a)] 
Suppose $\hat{F} \in \PPE$ has positive min, and $t_0>0$.  Then 
there is a weak solution $F$ for min-driven clustering on $[t_0,\infty)$
with $F_{t_0}=\hat{F}$. Moreover, the min $\len(t)$ of $F_t$ 
satisfies \qref{Lest}.
\item[(b)] The solution in (a) is unique on $[t_0,t_1]$ for any $t_1>t_0$.
\item[(c)] Let $\hat{F}\nsup$ be a sequence in $\PPE$ with positive
  min and $F\nsup$ the weak solutions with
  $F\nsup_{t_0}=\hat{F}\nsup$. Assume $\lim_{n \to \infty}
  \hat{F}\nsup=\hat{F}$ and the limit has positive min. Then $F\nsup_t
  \to F_t$ for every $t > t_0$.
\end{enumerate}
\end{thm}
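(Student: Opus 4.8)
The plan is to transfer everything to the trace-measure side, where the evolution is linear, prove the statements there, and transfer back via the change-of-variables machinery of Section~\ref{sec:prelim}. For part (a), given $\hat F$ with positive min, I would first use Lemma~\ref{approxincr} (or rather a smoothing of $\hat F$) to pick continuous probability densities $\rhoinit\nsup$ with min $\tau_0$ such that the associated distribution functions converge to $\hat F$ at all continuity points. For each $n$ the classical solution of the previous theorem gives $\rho_\tau\nsup$, hence via \qref{newold}--\qref{defL} a min history $\len\nsup(t)$ and Laplace transforms $\Rbar\nsup_t(q)=\forma^{-1}(\alphabar\nsup_t(q))$. The uniform bound \qref{Aest}--\qref{Lest} is the crucial a priori estimate: it shows $A\nsup(\tau)\le (Q_1/\log2)\log(2\tau/\tau_0)$ uniformly in $n$, hence $\sup_n\alphabar\nsup(q)<\infty$ for every $q>0$, so Theorem~\ref{WT} (plus a diagonal/subsequence argument using that $\alphabar\nsup(q)=\forma(\Rbar\nsup_{t_0}(q))\to\forma(\widehat{\hat F}(q))$ pointwise by continuity of $\forma$) yields convergence $\len\nsup\to\len$ at continuity points and $\alphabar\nsup_t(q)\to\alphabar_t(q)$ for all $t,q$. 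Since limits of completely monotone functions are completely monotone, $\forma^{-1}(\alphabar_t(q))$ is the Laplace transform of a measure $\newrho_t$, and $\newrho_t(\Rplus)=1$ follows from $\alphabar_t(0^+)=\infty$ exactly as in the proof above. Passing to the limit in the moment identity \qref{newmoment} for the $\rho\nsup$ (the integrand is uniformly bounded by $2\|a\|_\infty$ and $ds/(Q_1s)$ is integrable on $[t_0,t_1]$, so dominated convergence applies, using $F\nsup_s\to F_s$ weakly which follows from Laplace-transform convergence on the dense set $\{e^{-qx}\}\cup\{1\}$) gives that $F=\newrho$ is a weak solution. The bound \qref{Lest} survives the limit because it is a pointwise inequality at continuity points of $\len$, and one extends it to all $t$ by right-continuity.

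For part (b), uniqueness, the key point is that any weak solution is \emph{forced} to satisfy the solution formula \qref{newsol5}. Given a weak solution $F$ on $[t_0,t_1]$, take $a(x)=e^{-qx}$ (which lies in $C_0(\Rplus)$) in \qref{newmoment}; this recovers the integrated form of \qref{newsol}, i.e. $\partial_t\Rbar_t(q)=-e^{-q\len(t)}(1-Q(\Rbar_t(q)))/(Q_1t)$ in the integral sense, which after dividing by $1-Q$ and using \qref{defPhi} gives $\partial_t\forma(\Rbar_t(q))=-e^{-q\len(t)}/(Q_1t)$. Integrating from $t$ to $t_1$ and using that $\Rbar_{t_1}(q)\le e^{-q\len(t_1)}\to 0$ as $t_1\to\infty$ — but here $t_1$ is finite, so instead one integrates forward from $t_0$: the value $\Rbar_{t_0}(q)=\widehat{\hat F}(q)$ is prescribed, so $\forma(\Rbar_t(q))=\forma(\widehat{\hat F}(q))-\int_{t_0}^t e^{-q\len(s)}ds/(Q_1s)$ is determined \emph{once $\len$ is known}. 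It remains to see $\len$ is determined: $\len(t)=\inf\{x:F_t(x)>0\}$, and $F_t$ is determined by $\Rbar_t$, giving a closed fixed-point characterization. Concretely, I would run a Gronwall/continuation argument on successive strips as in the classical existence proof — on $[\tau_0,2\tau_0)$ the min history is pinned down by the structure of the equation, then one bootstraps — or, more cleanly, observe that \qref{defL} reads $\log(t/t_0)=A(\len(t))-\alpha_0$ where $A$ is the inverse Laplace transform of $\forma(\widehat{\hat F})$, hence completely determined by the data; so $\len(t)=\exp A\iinv(t)$ is forced, and then $\Rbar_t$ is forced by \qref{newsol5}. Two weak solutions with the same data thus have the same Laplace transforms at all $(t,q)$, hence coincide.

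For part (c), continuous dependence, this is essentially free from the construction: by the argument in part (a) applied to the data $\hat F\nsup\to\hat F$, we have $\alphabar\nsup(q)=\forma(\widehat{\hat F\nsup}(q))\to\forma(\widehat{\hat F}(q))=\alphabar(q)$ for every $q>0$ by continuity of $\forma$ and of Laplace transforms under weak convergence, together with the uniform bound \qref{Aest}. Theorem~\ref{WT} then gives $\len\nsup(t)\to\len(t)$ at continuity points of $\len$ and $\alphabar\nsup_t(q)\to\alphabar_t(q)$ for all $t>t_0$, $q>0$; applying $\forma^{-1}$ (continuous on $[0,1)$, and one checks the relevant values stay in a compact subinterval using $\Rbar_t(q)\le e^{-q\len(t)}<1$) gives $\Rbar\nsup_t(q)\to\Rbar_t(q)$, i.e. $F\nsup_t\to F_t$ weakly for every $t>t_0$. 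One small subtlety: \emph{a priori} $\len\nsup(t)\to\len(t)$ only at continuity points of $\len$, but the Laplace transform convergence $\Rbar\nsup_t(q)\to\Rbar_t(q)$ holds at every $t$ because it comes from the linear formula \qref{newsol5}, where the integral $\int_t^\infty e^{-q\len\nsup(s)}ds/s\to\int_t^\infty e^{-q\len(s)}ds/s$ by dominated convergence (the uniform bound controls the tail) regardless of continuity of $\len$ at $t$ itself.

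\medskip\noindent\textbf{Main obstacle.} The delicate part is uniqueness (b): one must rule out ``spurious'' weak solutions in which $\len(t)$ is not the min of $F_t$ in a self-consistent way, or jumps differently. The resolution is to show that conditions (2)--(3) in the definition of weak solution, specialized to exponential test functions, rigidly force both $A$ (from the data alone, via \qref{newsol4}) and then $\len$ (via \qref{defL}, inverting $A$), leaving no freedom; establishing this inversion is clean precisely because $A$ is unbounded and right-continuous, so $A\iinv$ is well-defined. A secondary technical point throughout is keeping the arguments of $\forma^{-1}$ bounded away from $1$ so that continuity/approximation can be applied — this is guaranteed by the elementary estimate $\Rbar_t(q)\le e^{-q\len(t)}$ combined with the uniform lower bound on $\len(t)$ from \qref{Lest}.
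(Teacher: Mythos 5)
Your parts (a) and (c) follow essentially the paper's own route (approximate $\hat F$ by continuous densities with min near $\tau_0$, use the classical solutions and the uniform estimate \qref{Aest}--\qref{Lest}, pass to the limit via Theorem~\ref{WT}, complete monotonicity and dominated convergence), and that much is sound. One point you gloss over, which the paper flags explicitly: you must show that the limiting function $\len(t)$ is actually the \emph{min} of the limiting measure $F_t$. Weak convergence of $F^{(n)}_t$ together with $\len\nsup(t)\to\len(t)$ only gives that $F_t$ is supported in $[\len(t),\infty)$; it does not rule out that the infimum of the support of $F_t$ is strictly larger. Since the moment identity \qref{newmoment} is formulated in terms of the min of $F_s$, without this identification the constructed $F$ has not yet been shown to be a weak solution, nor does \qref{Lest} automatically transfer to the true min.

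The substantive gap is in (b). You claim $\len$ is forced because ``\qref{defL} reads $\log(t/t_0)=A(\len(t))-\alpha_0$ with $A$ determined by the data,'' but \qref{defL} is a property of the \emph{constructed} solution, derived from the classical trace $A(d\tau)=Q_1\rho_\tau(\tau)\,d\tau$; for an arbitrary measure-valued weak solution on a finite interval $[t_0,t_1]$ no such relation is available a priori. Your own forward-integrated identity, $\forma(\Rbar_t(q))=\forma(\Rbar_{t_0}(q))-\int_{t_0}^t e^{-q\len(s)}\,ds/s$ (note: no $1/Q_1$ here, since the $Q_1$ in \qref{newmoment} cancels against $\forma'=Q_1/(1-Q)$), only shows that $\Rbar_t$ is determined \emph{once $\len$ is known} --- exactly the circularity you acknowledge. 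To close it along your lines you would have to extend the arbitrary weak solution to $[t_0,\infty)$ (e.g.\ by gluing the solution of part (a) with data $F_{t_1}$ at time $t_1$), let $t\to\infty$ using $\Rbar_t(q)\le e^{-q\len(t)}$ to obtain $\int_{t_0}^\infty e^{-q\len(s)}\,ds/s=\forma(\Rbar_{t_0}(q))$, and then invoke Proposition~\ref{propCV} to identify $\len$ with $\exp A^{\dag}$; none of these steps appears in your proposal. The paper instead argues locally: for sizes in $[\tau_0,2\tau_0)$ there is no gain term, so testing \qref{newmoment} with functions supported in $[0,2\tau_0)$ yields the explicit formula $F_t(x)=\bigl(F_{t_0}(x)-Q_1^{-1}\log(t/t_0)\bigr)_+$ there, whence the min is locally the inverse of $\hat t(x)=t_0\exp(Q_1\hat F(x))$ and is forced by the data; this same local argument is what identifies $\len(t)$ with the min of $F_t$ in part (a). Your ``successive strips'' remark points toward this, but it is not carried out, so as written the uniqueness proof (and hence the final step of existence) is incomplete.
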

\begin{proof}
{\em 1.\/} It follows from Weierstrass' approximation theorem that finite 
linear combinations $\sum_{k=1}^N c_k e^{-kx}$ are dense in
$C_0(\Rplus)$. Therefore, in order to verify the moment identity, it
is sufficient to consider the test functions $e^{-qx}$, $q>0$.
Thus, to prove existence of a weak solution on $[t_0,\infty)$, 
it suffices to construct $F$ weakly continuous
such that the Laplace transform
satisfies the solution formula \qref{newsol5}. 

{\em 2.\/} Let $\tau_0$ denote the min of $\hat{F}$.  We approximate
$\hat{F}$ by a sequence of continuous probability densities
$\hat\rho\nsup$ with min $\tau_0\nsup$ with $\lim_{n \to
  \infty} \tau_0\nsup= \tau_0$. We further assume that $\hat\rho\nsup$
is strictly positive on $[\tau_0\nsup, \infty)$. It is immediate from
\qref{intrho} that the solutions $\rho_\tau\nsup$ are strictly
positive on $[\tau,\infty)$. The trace for these solutions, $A\nsup$
is obtained from \qref{alphadef} with $\tau_0$ replaced by 
$\tau_0\nsup$, $\rho$ by $\rho\nsup$ and the choice $\alpha_0=\log
t_0$. $A\nsup$ is continuous and strictly increasing, thus so are the
min histories $\len\nsup$.  We change variables from the solution
formula \qref{sol5} to \qref{newsol5} to obtain 
\be
\label{approxsol}
\rhobar\nsup_\tau(q) = \Rbar_t\nsup(q) =
 \forma^{-1} \left( \int_t^\infty e^{-q \len\nsup(s)}\frac{ds}{
     s}\right), \quad t \geq t_0.
\ee

{\em 3.\/}  As $n \to \infty$, we have $\hat\rhobar\nsup(q) \to
\Rbarhat(q)$ for every $q>0$ and $0 < \Rbarhat(q) < 1$ for $q>0$. 
The measures $A\nsup$ are supported on $[\tau_0\nsup,\infty)$ and
satisfy  $\alphabar\nsup(q) =\forma(\Rbar_{t_0}\nsup(q))$. Therefore, 
\[ \lim_{n \to \infty} \bar{A}\nsup(q) =\forma(\hat{\Rbar}(q)), \quad
q>0. \]
It follows that the traces $A\nsup$ converge  weakly to a
trace $A$ supported on $[\tau_0,\infty)$ that satisfies
$\alphabar(q) = \forma(\hat{\Rbar}(q))$. Therefore, by
Theorem~\ref{WT} the min histories
$\len\nsup$ converge to $\len$, the min history
associated to $A$.  Since $\len\nsup$ satisfy the uniform estimate
\qref{Lest},  we may use  the dominated convergence theorem
to assert 
\[ 
\lim_{n \to \infty} \Rbar_t\nsup(q) = 
\forma\inv\left( 
\int_t^\infty e^{-q \len(s)}\frac{ds}{s} \right)
, \quad t \geq t_0. 
\]
This shows that for every $t \geq t_0$ the measures $F_t\nsup$
converge weakly to a measure $F_t$ that satisfies \qref{newsol5}.
Then $F_t$ is a probability measure since
$\Rbar_t(0+)=\forma\inv(\infty)=1$.
It similarly follows from \qref{newsol5} that 
$F_t \to F_{t_1}$ as $t\to t_1$ for every $t_1 \in [t_0,\infty)$. 
This completes the proof of part (a), except that it remains to 
show that $\len(t)$ is in fact the min of $F_t$. 

{\em 4.\/} For (b) it suffices to prove uniqueness on 
$[t_0,t_1]$ for {\em some} $t_1>t_0$. 
The key is to prove uniqueness of the min $\len(t)$ on such a time
interval, since then uniqueness of $F_t$ is easy to establish via 
the Laplace transform.
Note that any weak solution is weakly continuous in time, since 
the right-hand side of \qref{newmoment} is Lipschitz continuous for
any test function $a\in C^0(\Rplus)$. Then since $F_{t_0}(x)>0$
for all $x>\tau_0$, the min $l(t)$ is right continuous at $t_0$,
so there exists $t_1>t_0$ with $\tau_0\le\tau_1=l(t_1)<2\tau_0$.
Now the idea is that for clusters of size less than $2\tau_0$, there
is no gain, only loss.  We claim that 
\begin{equation}\label{e.smallx}
F_t(x) = 
\left( F_{t_0}(x)- \frac1{Q_1}\log\left(\frac{t}{t_0}\right)\right)_+, 
\quad t\in[t_0,t_1], \quad x\in[\tau_0,2\tau_0).
\end{equation}
This follows by first considering points $x$ of continuity of both
$F_t$ and $F_{t_0}$ such that $\len(t)<x<2\tau_0$, so $F_t(x)>0$ --- 
approximate $\one_{[0,x)}$ by continuous test functions $a$ 
supported in $[0,2\tau_0)$.
By right continuity and the definition of min, 
\qref{e.smallx} follows for all $x\in[\len(t),2\tau_0)$, and
$F_t(x)=0$ for $x<\len(t)$. Now by weak continuity in time, 
we infer that with
\[
\hat t(x) = t_0\exp(Q_1\hat F(x))  ,
\]
for each point of continuity of $\hat F$ in 
$(\tau_0,\tau_1]$ we have $\len(t)<x$ for
$t < \hat t(x)$, and $x<\len(t)$ for $\hat t(x)<t$.
Now clearly $\len(t)$ is uniquely determined by
$\hat F$, since it is locally the inverse of $\hat t$.
It follows from taking the Laplace transform that the 
min history $\len(t)$ constructed in (a), that satisfies
\qref{newsol5}, agrees with the min of $F_t$.

{\em 5.\/} Part (c) is proven by
an argument very similar to Step 2 above. 
\end{proof}

{\em Scaling.\/}
We note for use below the following scaling property that follows
easily from the moment identity \qref{newmoment}:
Let $a,b>0$.
If $F$ is a weak solution for min-driven clustering on an
interval $J$, then $\hat F$ is a weak solution on $J/a$, where
\begin{equation} \label{e.scale}
\hat F_t(x) = F_{at}(bx), \qquad t\in J/a,\ x\ge 0.
\end{equation}
\section{Self-similar solutions} \label{sec.ss}

Let us recall that Gallay and Mielke have classified the
self-similar solutions as in \qref{ftheta1} in the case when 
$p_k=0$ for all large $k$. Here we will recover and extend the
basic existence results and classify all domains of attraction
by simple means based on the Laplace transform and Tauberian arguments. 

{\em Existence.}
In terms of the distribution function $F_t$, self-similar solutions
are of the form $F_t(x)= F_*(x/l(t))$ for some distribution function
$F_*$ with positive min ($=1$) and some min history $l(t)$. 
Without loss of generality we can assume $t_0=1$ and $l(t_0)=1$. 
From \eqref{newsol5} we obtain for the Laplace transform of $F_*$ that
\begin{equation}\label{ss1}
\bar F_*(q) = \bar F_t\left( \frac{q}{l(t)}\right) = \varphi^{-1} 
\left( \int_t^{\infty} e^{-q \frac{l(s)}{l(t)}} 
\frac{ds}{s}\right) = 
\varphi^{-1} \left( \int_1^{\infty} e^{-q \frac{l(ts)}{l(t)}}
\frac{ds}{s}\right).
\end{equation}
Hence we conclude that the min history of self-similar solutions 
must satisfy ${l(ts)}/{l(t)}= g(s)$ for some function $g(s)$. 
Since $l$ is also positive, increasing and non-constant, we conclude 
(see \cite[VIII.8.1]{Feller}, or subsection~\ref{ss.nec} below) that 
necessarily, for some $\theta>0$, $l(s)=s^{1/\theta}$ and 
that $F_*$ must be a distribution $F\thetasup$ that satisfies
\be\label{sss1b}
\Rbar\thetasup(q) = \forma\inv\left( \theta \,\mathrm{Ei}(q)\right),
\ee
where $\mathrm{Ei}(q) = \int_q^\infty e^{-s}ds/s$ denotes the exponential
integral. 
Provided such a distribution $F\thetasup$ does exist, then 
$\Rbar_t(q)=\Rbar\thetasup(t^{1/\theta}q)$ satisfies \qref{newsol5}
and is continuous in time, hence determines a self-similar
solution by step 1 of the proof of Theorem~\ref{thm.wp}.
The corresponding trace measures are given by
\be
\label{sss1}
A\thetasup(\tau) = \theta\, \log \tau, \quad \tau \in
(0,\infty).
\ee
As a byproduct of the characterization of scaling limits in section
\ref{sec.rv}, we will prove the existence of distributions
$F\thetasup$ that satisfy \qref{sss1b} for $0<\theta\le1$, and that
$\theta\le1$ is necessary.

{\em Densities.}
Next we show that for the 
self-similar solutions $F\thetasup(x/t^{1/\theta})$  $(0<\theta\le1)$,
the probability distributions $F\thetasup$ have piecewise smooth
densities $\den\thetasup$ that satisfy an integrodifferential equation,
\begin{equation}\label{etaeq}
- \partial_y \big ( y \den^{(\theta)}(y) \big ) = 
\frac{\theta}{Q_1} \,\Q(\den^{(\theta)})(y-1)\,,\quad y>1, 
\qquad \den^{(\theta)}(1)=\frac{\theta}{Q_1}\,,
\end{equation}
and we study the decay of $\den\thetasup(y)$ as $y\to\infty$.
When $p_k=0$ for all large $k$ (so $Q(z)$ is a polynomial)
this has been done by Gallay and Mielke \cite{GM}. 
We will recover most of their results by simpler means
(the exponential rate of decay for $\theta=1$ is an exception)
and extend them to the case when $p_k$ is nonzero for infinitely many $k$.
It turns out, however, that when $\sum p_kk\log k=\infty$, {\em none} 
of the profiles have finite first moment, including the case $\theta=1$.

Let $\theta\in(0,1]$. We claim that the probability distribution $F=F\thetasup$ 
satisfies the following weak-form profile equation:
\be\label{e.wkss}
\int_{\R_+} x a'(x) F(dx) = 
 \frac{\theta}{Q_1}  \sum_{k=1}^\infty 
p_k \int_{\R_+^k} 
\left( a\left(1+\sum_{j=1}^k y_j\right) - a(1) \right) \prod_{j=1}^k F(dy_j) 
\ee 
for all $C^1$ functions $a\in C_0(\R_+)$.
This follows from the fact that we know $F(x/t^{1/\theta})$ is a weak
solution for the min-driven clustering equation \qref{newmoment} with
$\len(t)=t^{1/\theta}$. Changing variables in \qref{newmoment} via
$x=\len(t)\hat x$ and similarly for $y_j$, we differentiate at $t=1$
to obtain \qref{e.wkss}.

Now, the min of $F$ is 1 (this will be shown in section 6.2), 
so $F$ has density $\den(x)=0$ on $(0,1)$.  Taking $a$ to be
supported in $(1,2)$ we find that the right-hand side of \qref{e.wkss}
vanishes. Hence restricted to $(1,2)$, the measure 
$xF(dx)=\beta\,dx$ for some constant $\beta$, so $F$ has density 
$\den(x)=\beta/x$ on $(1,2)$.
Taking $a(x)=0$ for $x\le1$, $a(x)=1$ for $x\ge2$ 
(approximated by limits---note $\int_n^{n+1}xF(dx)\to0$ along a
subsequence), 
we find $\beta = \theta/Q_1$. Taking $a$ supported in $(0,2)$ ,
we find the right-hand side is $-\beta a(1)$ and we
can conclude that $1$ is not an atom of $F$. 

In this way, 
proceeding inductively on intervals $(0,n)$ we deduce that 
$F$ has density $\den(x)$ satisfying \qref{etaeq}.
When $a$ has support in $(1,n+1)$ the right-hand side depends on
the restriction of $F$ to $(0,n)$ where it has density $\den(x)$,
and thus $xF(dx)$ has density $x\den(x)$ determined by 
\qref{etaeq} on $(1,n+1)$.

{\em Decay.}
Next we wish 
to characterize the decay behavior of the densities $\den\thetasup$.
For this we use the properties of the function $\kappa(q)$ 
which was introduced in \eqref{kappadef}.
We denote by $\kappa^\#$ the {\em de Bruijn conjugate} of $\kappa$. 
See \cite[Sec.\ 1.5.7]{Bingham}.
This is a slowly varying function satisfying
\be \label{kapsharp}
\kappa(q)\kappa^\#(q\kappa(q))\sim 1 \quad\mbox{ as $q\to0$.}
\ee
If $\kappa_0>0$, then $\kappa^\#(0^+)=\kappa_0\inv$, 
and if $\kappa_0=0$, then $\kappa^\#(0^+)=\infty$.

\begin{thm} \label{t.ss}
For every $\theta \in (0,1]$,
the density $\den^{(\theta)}(x)$ of the self-similar profile
$F\thetasup$ has the following properties:
\begin{enumerate}
\item[(i)]
If $\theta 
\in (0,1)$, then as $x\to\infty$,
\begin{equation}
\label{etadecay}
\den^{(\theta)}(x) \sim x^{-(1+\theta)} e^{\theta \gamma}\kappa^\#(x^{-\theta}) 
\frac{\theta(1-\theta)}{\Gamma (2-\theta)}.
\end{equation}
Here $\gamma$ 
is the Euler-Mascheroni constant and $\Gamma$ is the $\Gamma$-function.
\item[(ii)]
If $\theta =1$ then as $x \to \infty$,
\begin{equation}
\label{Fdecay}
\int_0^x y \den^{(1)}(y)\,dy \sim  { e^{ \gamma}}{\kappa^\#(x^{-1})} \,
\end{equation}
\end{enumerate}
\end{thm}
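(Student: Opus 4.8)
The plan is to read the $q\to0$ behaviour of the Laplace transform $\Rbar\thetasup$ off \eqref{sss1b}, convert it into tail information about the truncated first moment $M(x):=\int_0^x y\,F\thetasup(dy)$ via Karamata's Tauberian theorem (Theorem~\ref{HLK}), and, when $\theta<1$, differentiate once using the profile equation \eqref{etaeq}. The pivot is the identity contained in the definition \eqref{kappadef}: for $0<\epsilon<1$ one has $\forma(1-\epsilon)=-\log\epsilon-\log\kappa(\epsilon)$, i.e.\ $\epsilon\,\kappa(\epsilon)=e^{-\forma(1-\epsilon)}$. Setting $\epsilon(q):=1-\Rbar\thetasup(q)$, which tends to $0$ as $q\to0$ because $\forma(\Rbar\thetasup(q))=\theta\,\mathrm{Ei}(q)\to\infty$, equation \eqref{sss1b} becomes
\[
\epsilon(q)\,\kappa\bigl(\epsilon(q)\bigr)=e^{-\theta\,\mathrm{Ei}(q)} .
\]
Since $\mathrm{Ei}(q)=-\log q-\gamma+o(1)$ as $q\to0$ --- the classical exponential-integral asymptotics, which one may also read off from de Haan's Theorem~\ref{t.deH} applied to the measure $x^{-1}\one_{[1,\infty)}(x)\,dx$ (distribution function $\log x$, Laplace transform $\mathrm{Ei}$) --- we obtain $e^{-\theta\mathrm{Ei}(q)}\sim e^{\theta\gamma}q^\theta$.

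Next I would invert the relation $\epsilon\,\kappa(\epsilon)\sim e^{\theta\gamma}q^\theta$. The map $\epsilon\mapsto\epsilon\,\kappa(\epsilon)=e^{-\forma(1-\epsilon)}$ is, by strict monotonicity of $\forma$ on $[0,1)$, a continuous increasing bijection of a right neighbourhood of $0$ onto one, and it is regularly varying at $0$ with index $1$ because $\kappa$ is slowly varying (Lemma~\ref{L.kappa2}). By the defining property \eqref{kapsharp} of the de Bruijn conjugate $\kappa^\#$, its inverse is asymptotic to $u\mapsto u\,\kappa^\#(u)$ as $u\to0$ (see \cite[Sec.\ 1.5.7]{Bingham}); using in addition that $\kappa^\#$ is slowly varying and hence insensitive to the constant factor $e^{\theta\gamma}$, this yields
\[
1-\Rbar\thetasup(q)=\epsilon(q)\sim e^{\theta\gamma}\,q^\theta\,\kappa^\#(q^\theta),\qquad q\to0,
\]
so $1-\Rbar\thetasup$ is regularly varying at $0$ with index $\theta$. (When $\kappa_0>0$, as for polynomial $Q$, this simplifies via $\kappa^\#(0^+)=\kappa_0^{-1}$.)

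Now $\int_0^\infty e^{-qx}x\,F\thetasup(dx)=\partial_q\bigl(1-\Rbar\thetasup(q)\bigr)$ is monotone in $q$, and $1-\Rbar\thetasup(q)\to0$ as $q\to0$, so the monotone density theorem at the origin \cite[Thm.\ 1.7.2]{Bingham} gives $\int_0^\infty e^{-qx}x\,F\thetasup(dx)\sim\theta\,e^{\theta\gamma}q^{\theta-1}\kappa^\#(q^\theta)$ as $q\to0$. Applying Theorem~\ref{HLK} to the measure $x\,F\thetasup(dx)$ with index $1-\theta\in[0,1)$ gives
\[
M(x)=\int_0^x y\,F\thetasup(dy)\sim\frac{\theta\,e^{\theta\gamma}}{\Gamma(2-\theta)}\,x^{1-\theta}\,\kappa^\#(x^{-\theta}),\qquad x\to\infty .
\]
For $\theta=1$ this is precisely $M(x)\sim e^{\gamma}\kappa^\#(x^{-1})$, which is \eqref{Fdecay}, so (ii) follows. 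For $\theta\in(0,1)$ one further step suffices: $M'(x)=x\,\den^{(\theta)}(x)$, and \eqref{etaeq} shows $\partial_x\bigl(x\den^{(\theta)}(x)\bigr)=-\frac{\theta}{Q_1}\Q(\den^{(\theta)})(x-1)\le0$ for $x>1$, so $x\den^{(\theta)}(x)$ is decreasing; since $M$ is regularly varying at infinity with positive index $1-\theta$, the monotone density theorem gives $x\den^{(\theta)}(x)=M'(x)\sim(1-\theta)M(x)/x$, and dividing by $x$ yields \eqref{etadecay}.

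I expect the inversion step to be the main obstacle. In the heavy case $\sum p_k k\log k=\infty$ one has $\kappa_0=0$, so $\kappa$ is a genuinely non-trivial slowly varying function and $\kappa^\#$ is not eventually constant; one must then justify both $\kappa^\#(e^{\theta\gamma}q^\theta)\sim\kappa^\#(q^\theta)$ and the asymptotic-inverse identity in full generality (via the uniform convergence theorem for regularly varying functions), rather than appealing to a limiting constant as in the polynomial case of \cite{GM}. A secondary bookkeeping point is carrying the Euler--Mascheroni constant cleanly through $e^{-\theta\mathrm{Ei}(q)}\sim e^{\theta\gamma}q^\theta$, which is exactly where Theorem~\ref{t.deH} is convenient.
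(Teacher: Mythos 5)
Your proposal is correct and follows essentially the same route as the paper's proof: read off $w\kappa(w)=e^{-\theta\mathrm{Ei}(q)}\sim e^{\theta\gamma}q^{\theta}$ from \eqref{sss1b} and \eqref{kappadef}, invert asymptotically via the de Bruijn conjugate to get $1-\Rbar\thetasup(q)\sim e^{\theta\gamma}q^{\theta}\kappa^{\#}(q^{\theta})$, differentiate (the paper cites Lemma 3.3 of \cite{MP1}, which plays the role of your monotone density argument), apply the Karamata Tauberian theorem \ref{HLK}, and for $\theta<1$ use the monotonicity of $y\den^{(\theta)}(y)$ from \eqref{etaeq} together with the monotone density lemma to pass from the truncated moment to the density.
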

{\em Remarks:}
The asymptotics \eqref{Fdecay} imply the result of \cite{GM} for total
mass in the case that $\kappa_0>0$.  We will not pursue here the
delicate question of the precise (exponential) 
decay rate of the density $\den^{(1)}$ in this case, which was studied in 
\cite{CP} for $Q(z)=z^2$ and in the polynomial case in \cite{GM}.
If $\kappa_0=0$, however, we see that $F^{(1)}$ has infinite mass. 

\begin{proof}
We rewrite equation \eqref{sss1b} for $ \Rbar\thetasup$ 
using \qref{kappadef} and standard asymptotics for the 
exponential integral \cite{Olver}, as follows. With $w=1-\Rbar\thetasup(q)$, 
\[
-\forma(\Rbar\thetasup(q))= \log(w\kappa(w)) = -\theta\mathrm{Ei}(q)
=\theta(\log q+ \gamma+o(1)), \quad q\to 0.
\]
Then $w\kappa(w)\sim q^\theta e^{\theta\gamma}$,
whence asymptotic inversion \cite[Sec.\ 1.5.7]{Bingham} yields 
\be \label{asyinv}
w=1-\Rbar\thetasup(q) \sim q^\theta e^{\theta\gamma} \kappa^\#(q^\theta) .
\ee
Now differentiating (i.e., using Lemma 3.3 of \cite{MP1}) we find
\[
\int_0^{\infty} e^{-qx}x F^{(\theta)}(dx)  =
-\partial_q  \Rbar\thetasup (q) \sim \theta q^{\theta-1} 
{e^{\theta \gamma}}{\kappa^\#(q^{\theta})}, \qquad q\to0.
\]
Now the Tauberian theorem \ref{HLK} implies 
\begin{equation}
\label{Fdecay1}
\int_0^x y F^{(\theta)}(dy) \sim 
\frac\theta{\Gamma (2-\theta)}
x^{1-\theta} 
{e^{\theta \gamma}\kappa^\#(x^{-\theta})} 
\,, \qquad x\to\infty,
\end{equation}
which is just \eqref{Fdecay} in the case $\theta=1$.

We can now also derive the decay behavior of $\den^{(\theta)}$ in the
case $\theta<1$. In fact, it follows from \eqref{etaeq}
that $y  \den^{(\theta)}(y)$ is decreasing. Hence by a lemma in
\cite[XIII.5]{Feller},  \eqref{Fdecay1} implies  \eqref{etadecay}.
\end{proof}
\section{Domains of attraction of self-similar solutions}
\label{sec.rv}

We proceed to prove Theorem~\ref{thm.domains}. 
The proof has two parts. The first
is to show that regular variation of $\len(t)$ as $t \to \infty$ is
necessary and sufficient for convergence to a scaling limit. 
The second is the equivalence between 
regular variation of $\len(t)$ as $t \to \infty$ and  
regular variation of $\int_0^x y\newrho_{t_0}(dy)$ as $x \to \infty$. 
The second part is based on the Tauberian theorems~\ref{HLK}
and~\ref{t.deH}. 
The most subtle aspect (despite the simple proof) is to deduce regular
variation of $\len$ from the existence of a scaling limit. 
This is the assertion of rigidity, and we treat it first.

\subsection{Regular variation of the min history is necessary}
\label{ss.nec}
{\em 1.\/} 
Assume there is a rescaling $\lambda(t) \to \infty$ and a
probability distribution function $\newrho_*$ with positive min
(called $\tau_*$) such that $F_t(\lambda(t) \cdot) \to F_*$. 
This is equivalent to convergence of the Laplace transforms,
\be
\label{sss10}
 \lim_{t \to \infty} \Rbar_t \left( \frac{q}{\lambda(t)} \right) =
\Rbar_*(q), \quad q \geq 0. 
\ee
After a trivial scaling of time and cluster size, 
we may assume $t_0=1$ and $\tau_*=1$.
Since $F_*$ is a probability measure with positive min,
there is a unique trace measure $A_*$, with $A_*(\tau)=0$ on
$(0,1)$ and $A_*(\tau)\to\infty$ as $\tau\to\infty$,
and a min history $\len_*$ on $[1,\infty)$, with $\len_*(1)=1$,
such that
\be
\label{sss2} \alphabar_*(q)= \int_{1}^\infty e^{-q\len_*(s)}\frac{ds}{s}=
\forma(\Rbar_*(q)).
\ee
Now, $s\mapsto F_{ts}(\lambda(t)x)$ is a rescaled solution, by \qref{e.scale}.
The rescaled min histories given by $\len^{(t)}(s)=\len(ts)/\lambda(t)$,
$t,s\ge 1$ have associated trace measures $A^{(t)}$ satisfying 
$\alphabar^{(t)}(q) \to \alphabar_*(q)$ as $t\to\infty$, for all $q>0$.
We use the solution
formula \qref{newsol5}, \qref{sss10} and \qref{sss2} to obtain
\be
\label{sss3}
\lim_{t \to \infty} \int_{1}^\infty e^{-q\len(ts)/\lambda(t)}
\frac{ds}{s} = \int_{1}^\infty e^{-q\len_*(s)}\frac{ds}{s}, \quad q>0.
\ee
Theorem~\ref{WT} now implies  
that at every point of continuity of $\len_*$,  
\be
\label{sss4}
\lim_{t \to \infty} \len(ts)/\lambda(t) = \len_*(s).
\ee

{\em 2. \/}  Let $s_0$ be a point of continuity of $\len_*$,
and let $U(t)=\len(ts_0)$, $t\ge1$. Let $B$ be the set of $x\ge1$
such that 
\be \label{limrat}
\psi(x) = \lim_{t\to\infty} \frac{U(tx)}{U(t)}
\ee
exists.  We deduce $\psi(x)$ is a power of $x$ by 
following the simple argument in \cite[VIII.8.1]{Feller}:
By \qref{sss4}, 
we have $x\in B$ if $xs_0$ is a point of continuity of
$\len_*$. If $x_1$, $x_2\in B$ then $x_1x_2\in B$ and 
\be \label{cauchy}
\psi(x_1x_2)=\psi(x_1)\psi(x_2). 
\ee
Since $U$ is increasing, so is $\psi$, and since $B$ is dense
in $[1,\infty)$ we can extend $\psi$ by right continuity so
\qref{cauchy} holds for all $x_1, x_2\ge 1$. Then we can set
$\psi(x)=1/\psi(1/x)$ for $x\in(0,1)$ and have \qref{cauchy}
for all $x>0$.  Since $\psi(x)$ is positive, locally bounded and not constant,
it follows $\psi$ is a pure power law, and 
we can write $\psi(x)=x^{1/\theta}$ for some $\theta>0$.

{\em 3.\/}  Since $\psi$ is continuous and increasing, it is
easy to see \qref{limrat} holds for all $x\in [1,\infty)=B$.
Then we infer $\tilde L(t)=U(t)t^{-1/\theta}$ is slowly varying, and 
$U$, hence $\len$, is regularly varying at $\infty$ 
with index $1/\theta$.
Further, since $\len_*(xs_0)/\len_*(s_0)=\psi(x)=x^{1/\theta}$
whenever $xs_0$ is a point of continuity of $\len_*$, 
and $s_0$ is an arbitrary point of continuity, 
it follows $\len_*(t)= t^{1/\theta}$ for all $t\ge1$.

\subsection{Regular variation of the min history is sufficient}
\label{subsec:suff}
Let us assume that the min history $\len$ is regularly varying,
as in \qref{conv3}.  Convergence to
self-similar form is then quick: Since 
\[ \frac{\len(ts)}{\len(t)} = s^{1/\theta}
\frac{\tilde{L}(ts)}{\tilde{L}(t)}, \]
we have  
\[ \lim_{t \to \infty}  \frac{\len(ts)}{\len(t)} = s^{1/\theta}.\]
Moreover, for any $\eps >0$, there exists $t_\eps>0$ such that 
\[ s^{1/\theta-\eps} \leq \frac{\len(ts)}{\len(t)}, \quad t \geq t_\eps. \]
Therefore, we may use the solution formula and the dominated
convergence theorem to see that as $t \to \infty$
\be 
\forma\left(\Rbar_t\left(\frac{q}{\len(t)}\right)\right) = \int_{1}^\infty
e^{-q\len(ts)/\len(t)} \frac{ds}{s} \to \int_{1}^\infty
e^{-qs^{1/\theta}} \frac{ds}{s}. 
\ee
This implies $\lim_{t \to \infty} F_t(\len(t) x) =
F\thetasup(x)$ for every $x >0$, where $F\thetasup$ is a probability
distribution with positive min ($=1$) and
Laplace transform given by \qref{sss1b}.

\subsection{Tauberian arguments}

We now prove that if 
$0<\theta\le1$ and $\int_0^x y \newrho_{t_0}(dy)$ is
regularly varying with index $1-\theta$ as $x \to  \infty$, 
then $\len(t)$ is regularly varying with index $1/\theta$
as $t\to \infty$, and conversely. 

{\em 1.\/} 
It is convenient first to assume \qref{conv2} in the form
\[
\int_0^x y F_{t_0} (dy) \sim 
\frac{\theta}{\Gamma(2-\theta)}
x^{1-\theta} L_1(x)
, \quad x \to \infty,
\]
where $L_1$ is slowly varying.
Then Theorem \ref{HLK} implies 
\[
 -\partial_q \Rbar_{t_0}(q) \sim \theta q^{\theta-1} L_1\left(q\inv\right)
, \quad q \to 0.
\]
By integration (not difficult to justify as in Lemma 3.3 of \cite{MP1}) we find
\[
w:= 1- \Rbar_{t_0}(q) \sim q^\theta L_1\left(q\inv\right),
\quad  q \to 0.
\]
Using the solution formula \eqref{newsol5} and
\eqref{kappadef}, we write
\[
\log \big (w \kappa(w) \big )
= - \varphi \big ( \Rbar_{t_0}(q) \big ) = - \int_t^{\infty} e^{-q l(s)} 
\frac{ds}{s} = - \bar A(q)\,.
\]
Hence, $\exp(-\bar A(q))$ is regularly varying with exponent $\theta$,
and Theorem~\ref{t.deH} (de Haan's exponential Tauberian theorem)
implies that
\be \label{linvconv}
e^{A(1/q)}e^{-\bar A(q)} =
\invlen( q\inv ) w \kappa(w) \to e^{\gamma \theta}, \quad q \to 0.
\ee
Thus,
$\invlen(\tau) \sim e^{\gamma\theta} \tau^\theta 
\hat L_1(\tau^\theta)$ as $\tau\to\infty$, with 
\[
\hat L_1(s) = \frac{1}{L_1(s^{1/\theta})
          \kappa(s\inv L_1(s^{1/\theta})) } .
\]
It is easy to show $\hat L_1$ is slowly varying, 
using the uniform convergence theorem for slowly varying 
functions~\cite[Theorem 1.2.1]{Bingham}. 
Asymptotically solving $t=\invlen(\tau)$ by inverting $s\mapsto s\hat L_1(s)$ 
using the de Bruijn conjugate \cite[Theorem 1.5.13]{Bingham} finally yields
\begin{equation}\label{e.len}
l(t) \sim 
e^{-\gamma}t^{1/\theta} 
\hat L_1^\#(t)^{1/\theta}  , \quad 
t \to \infty\,, 
\end{equation}
which gives \eqref{conv3}.
If $\kappa_0 := \lim_{q \to 0} \kappa(q)>0$, we also can write
\begin{equation}
l(t) \sim e^{-\gamma} (\kappa_0t)^{1/\theta} (L_1^{-1/\theta})^\#(t) 
,\quad t \to \infty.
\end{equation}

{\em 2.\/}
We now prove the converse.
Assuming that \qref{conv3} holds, then
\begin{equation}
l(t) \sim e^{-\gamma}t^{1/\theta} L_2(t)^{1/\theta},  \quad  t \to \infty,
\end{equation}
for some slowly varying function $L_2$.
Then, by inversion,
$\invlen(\tau) \sim e^{\gamma\theta}\tau^{\theta} L_2^\# (\tau^{\theta})$ 
as $\tau \to \infty$.
Since $\invlen(\tau) = \exp{A(\tau)}$ and $\bar A(q) = 
-\log (w \kappa(w))$ with $w=1-\Rbar_{t_0}(q)$, 
we infer from Theorem~\ref{t.deH} that \eqref{linvconv} is true.
This implies
\[
w \kappa(w) \sim 
q^{\theta} L_2^\#(q^{-\theta})\inv , \quad q \to 0,
\]
and asymptotic inversion yields
$w\sim  q^\theta \hat L_2(q\inv)$
as $q \to 0$, with  
\newcommand{\tlx}{{L_2^\#(x^{\theta})}}
\[
\hat L_2(x) =
{\tlx\inv}\kappash\left(x^{-\theta}{\tlx\inv}\right),
\]
and $\hat L_2$ is slowly varying.
Differentiating using Lemma 3.3 of \cite{MP1}, we find
\[
\D_qw=-\partial_q \Rbar_{t_0}(q)  \sim  \theta  q^{\theta-1} 
\hat L_2\left(q\inv\right), \quad q\to0.
\]
Since 
$-\partial_q \Rbar_{t_0}(q) = \int_0^{\infty} e^{-qx} x F_{t_0}(dx) $ 
the Tauberian Theorem \ref{HLK} implies that
\begin{equation}
\int_0^x y F_{t_0}(dy) \sim 
\frac{\theta }{\Gamma(2-\theta)} 
x^{1-\theta} \hat L_2(x) ,
\quad x \to \infty\,,
\end{equation}
which is just \eqref{conv2}.
If $\kappa_0:=\lim_{q \to 0} \kappa(q)>0$, then 
\begin{equation}
\int_0^x y F_{t_0}(dy) \sim 
\frac{\theta }{\Gamma(2-\theta)} 
\frac{x^{1-\theta} }{ \tlx\kappa_0}
, \quad x \to \infty\,.
\end{equation}
This completes the proof.
\section{Eternal solutions}
\label{sec:lk}

\begin{defn}
A weak solution $\newrho$ for min-driven clustering
is an {\em eternal solution\/} if
it is defined on the maximal interval of existence $(0,\infty)$. 
\end{defn}

Our understanding of eternal solutions is closely connected to
the question of how they emerge from clusters of infinitesimal size.
From the solution formula \qref{newsol5} we see that as $t$ approaches
zero,
\[
\varphi(\Rbar_t(q)) = \int_t^\infty e^{-q\len(s)}\frac{ds}s \to
\infty,
\]
hence $\Rbar_t(q)\to1$, and this means the relative size distribution
$F_t$ always converges to a 
Dirac delta at zero size.  A different scaling is needed to
distinguish solutions through limits as $t\dnto0$.

\subsection{The class of \smeass}
What we will show is that the class of eternal solutions 
is in one-to-one correspondence with a suitable space of measures that
can be loosely thought of as `rescaled initial data' at $t=0$. 
This correspondence parallels the classical probabilistic
characterization of infinitely divisible laws via a \LK\/ formula. 
In probability theory, infinitely divisible distributions
are parametrized by the \LK\/ representation theorem, which expresses 
the log of the characteristic function (Fourier transform) in terms of 
a measure that satisfies certain finiteness conditions.
In particular~\cite[XIII.7]{Feller}, a function $\omega(q)$
is the Laplace transform $\intR \rme^{-qx} F(dx)$
of an infinitely divisible probability measure 
$F$ supported on $\ebar$ if and only if
$\omega(q)=\exp(-\lap(q))$ where
the Laplace exponent $\lap$ admits the representation
\be
\label{R.LK2a}
\lap (q) = \int_{[0,\infty)} \frac{1-\rme^{-qx}}{x} \sm(dx)
\ee
for some measure $\sm$ on $[0,\infty)$ that satisfies
\begin{equation}\label{R.gendef2}
\int_{[0,\infty)} (1 \wedge y^{-1}) \sm(dy) <\infty.
\end{equation}
(Here $a \wedge b=\min(a,b)$.)
As in~\cite{MP3} we call such measures {\em \smeass} (short
for ``generating measures,'' a term motivated by their connection
with generators of convolution semigroups in probability
\cite[XIII.9(a)]{Feller}).  Some basic analytic facts
about Laplace exponents and {\smeass\/} are collected
in~\cite[Sec. 3]{MP3}. 

\begin{defn}
\label{R.candef}
A measure $\sm$ on $[0,\infty)$ is a {\em \smeas}\ if 
\qref{R.gendef2} holds.
In addition, we say that a \smeas\ $\sm$ is {\em divergent\/} if
\be
\label{R.candef5}
\sm(0)>0 \quad\mbox{or}\quad 
\int_{[0,\infty)} y^{-1} \sm(dy) = \infty.
\ee
\end{defn}
\noindent

Here recall we use the notation $\sm(x)=\int_{[0,x]} \sm(dy)$.
The space of \smeass\ has a natural weak topology which
is fundamental in our study of scaling dynamics.
\begin{defn}
\la{R.proper}
A sequence of \smeass\ $\sm\nsup$ {\em converges} to a 
\smeas\ $\sm$ as $n\to\infty$, if at every point $x\in(0,\infty)$
of continuity of $\sm$ we have 
\begin{equation}\label{R.candef1}
\sm\nsup(x)\to\sm(x)
\quad\mbox{and }\quad 
\int_{[x,\infty)} y\inv \sm\nsup(dy)\to \int_{[x,\infty)} y\inv \sm(dy) .
\end{equation}
\end{defn}

\subsection{A \LK\/ formula}
Our analysis of eternal solutions is motivated by a classification
theorem of Bertoin  for Smoluchowski's coagulation equation with
additive kernel~\cite{B_eternal}. Here eternal solutions were shown to
be in correspondence with divergent \smeass\/. This theorem was
generalized to other solvable kernels in~\cite{MP3}, based on the
observation that there is a  
natural Laplace  exponent $\lap_t$ associated to every
solution. For the model now under study, it is determined by the \smeas
\be\label{Gdef}
\sm_t(dx) = \frac{x F_t(dx)}{t\kappa^\#(t)},
\ee
and the associated Laplace exponent is 
\be
\label{eternal1}
\lap_t(q) = \frac{ 1 - \Rbar_t(q) }{t\kappa^\#(t)} 
= \int_0^\infty \frac{1 - e^{-qx}}{x} \sm_t (dx). 
\ee
(Recall that $\kappa^\#$ from \qref{kapsharp} 
is the de Bruijn conjugate of $\kappa$ from \qref{kappadef},
and that $\kappa^\#(0+)=\infty$ if $\sum p_kk\log k=\infty$,
$\kappa^\#(0^+)<\infty$ if $\sum p_kk\log k<\infty$.)
\begin{thm} \label{thm:LK}
\begin{enumerate}
\item[(a)]Let $\newrho$ be an eternal solution of
\qref{newmoment}. Then there is a
  divergent \smeas\ $\dsm$ such that 
$\sm_t$ converges to $\dsm$ as $t \dnto 0$. 
\item[(b)] Conversely, for every divergent \smeas\ $\dsm$, 
there is a unique eternal solution $\newrho$ of \qref{newmoment}
  such that $\sm_t$ converges to $\dsm$ as $t \dnto 0$.  
\item[(c)] The Laplace exponent of $\dsm$ is related to the min history
$\len(t)$ by
\be
\label{lk1}
\log \lapdsm(q)= \int_0^{1} \left( 1-e^{-q\len(s)}
\right) \frac{ds}{s} - \int_{1}^\infty e^{-q\len(s)} \frac{ds}{s}.
\ee
\end{enumerate}
\end{thm}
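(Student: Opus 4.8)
The plan is to run the same strategy as the well-posedness theorem, but now tracking the rescaled measure $\sm_t$ instead of $F_t$, and working near $t\dnto 0$ rather than near a fixed initial time. The governing bijection we will exploit throughout is: eternal solutions $\leftrightarrow$ min histories $\len$ on $(0,\infty)$ (with $\len(t)\upto\infty$ as $t\to\infty$) $\leftrightarrow$ traces $A$ on $(0,\infty)$, all linked by \qref{newsol5} and Proposition~\ref{propCV}. The key analytic input is de Haan's exponential Tauberian theorem~\ref{t.deH} together with the asymptotic inversion machinery already used in Section~\ref{sec.rv}, and the fact (Lemma~\ref{L.kappa}, Lemma~\ref{L.kappa2}) that $\kappa$ is slowly varying with $\kappa^\#(0^+)=\kappa_0^{-1}$ or $\infty$.

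First I would establish part (c) and, more or less simultaneously, the direction of (a) that produces the limit. Given an eternal solution, write the Laplace exponent $\lap_t(q)=(1-\Rbar_t(q))/(t\kappa^\#(t))$. By the solution formula, $\varphi(\Rbar_t(q))=\int_t^\infty e^{-q\len(s)}\,ds/s$, so with $w=1-\Rbar_t(q)$ we have $\log(w\kappa(w))=-\int_t^\infty e^{-q\len(s)}\,ds/s$. Splitting the integral at $s=1$ gives
\be
\log\bigl(w\,\kappa(w)\bigr) = \int_0^1\!\bigl(1-e^{-q\len(s)}\bigr)\frac{ds}{s}
 - \int_1^\infty\!e^{-q\len(s)}\frac{ds}{s} - \int_0^1\frac{ds}{s}.
\ee
The last term diverges; this is exactly the regularization that $\kappa^\#(t)$ absorbs. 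Concretely, taking $q\to 0$ at fixed $t$ and using that $\exp$ of the left side is the quantity $w\kappa(w)\sim t\kappa(t)$-type behaviour (via the relation $\invlen(1/q)w\kappa(w)\to e^{\gamma\theta}$ from \qref{linvconv}, now with the trace coming from $\len$ restricted to $[t,\infty)$), one identifies $t\kappa^\#(t)$ as precisely the normalizing factor making $\lap_t(q)$ converge. I expect the cleanest route is: show that $\lap_t(q)$ is, up to the normalization $t\kappa^\#(t)$, equal to $\exp$ of the displayed right-hand side evaluated with $\len$ shifted, and that as $t\dnto 0$ this converges pointwise in $q$ to $\exp$ of \qref{lk1}. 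Monotonicity in $t$ of $\int_t^\infty e^{-q\len(s)}\,ds/s$ makes the limit exist in $[0,\infty]$; finiteness for $q$ in a half-line follows because an eternal solution has $\len(s)$ bounded below on $[1,\infty)$ by \qref{Lest} and because the integral $\int_0^1(1-e^{-q\len(s)})\,ds/s$ converges (the integrand is $O(q\len(s))$ and $\len(s)\dnto 0$ at a rate controlled by the existence of the solution down to $0$). Then $\lap_{*}(q):=\lim_{t\dnto0}\lap_t(q)$ is a well-defined Laplace exponent, and $\dsm$ is recovered as the inverse of $x\,\dsm(dx)\leftrightarrow \lap_*$ via \qref{R.LK2a}. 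Divergence of $\dsm$ (condition \qref{R.candef5}) should come out of the fact that $\lap_*(q)\to\infty$ as $q\to\infty$ — equivalently $\int y^{-1}\dsm(dy)=\infty$ or $\dsm(0)>0$ — which in turn reflects that $\Rbar_t(q)\to 1$ as $t\dnto0$ at the unnormalized level. The convergence $\sm_t\to\dsm$ in the sense of Definition~\ref{R.proper} then follows from pointwise convergence of the Laplace exponents by the continuity/compactness facts about \smeass\ cited from~\cite[Sec.~3]{MP3}.

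For part (b) I would reverse the construction. Given a divergent \smeas\ $\dsm$ with Laplace exponent $\lapdsm$, define a candidate min history by solving \qref{lk1} for $\len$: the right-hand side of \qref{lk1}, as a function of $q$, is $\log\lapdsm(q)$, and one wants a nondecreasing $\len:(0,\infty)\to(0,\infty)$ with $\len\upto\infty$ realizing it. The natural move is to differentiate in $q$ (or use the trace): set $A$ to be the measure on $(0,\infty)$ whose behaviour is dictated by $\dsm$ through the asymptotic-inversion relations mirroring Section~6.3, then let $\len=\exp A\iinv$. Divergence of $\dsm$ is exactly what guarantees $\len(s)\dnto 0$ as $s\dnto0$ (so the solution is genuinely eternal, defined on all of $(0,\infty)$, not just $[t_0,\infty)$), by the same dichotomy $\kappa_0>0$ vs $\kappa_0=0$. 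Once $\len$ is in hand, define $\Rbar_t(q)=\varphi^{-1}\!\bigl(\int_t^\infty e^{-q\len(s)}\,ds/s\bigr)$ as in \qref{newsol5}; this is a completely monotone function of $q$ with value $1$ at $q=0^+$ (since $\varphi^{-1}(\infty)=1$), hence the Laplace transform of a probability measure $F_t$, and it is continuous in $t$ by the argument in Step~1 of the proof of Theorem~\ref{thm.wp}. So we get a weak solution on $(0,\infty)$. Uniqueness is the easy part: if two eternal solutions have $\sm_t\to\dsm$, their Laplace exponents have the same limit, which by part (c) forces the same $\len$ (the map from $\len$ to the right side of \qref{lk1} is injective because $A\mapsto\len$ is a bijection between traces and min histories), and then \qref{newsol5} forces $F_t$ to coincide for all $t$.

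The main obstacle I anticipate is pinning down the normalization $t\kappa^\#(t)$ rigorously — i.e., proving that $\lap_t(q)=(1-\Rbar_t(q))/(t\kappa^\#(t))$ has a finite nonzero limit as $t\dnto0$, \emph{simultaneously} in the two regimes $\kappa_0>0$ and $\kappa_0=0$, without circularity. The relation $w\kappa(w)\sim (\text{something})$ as $q\to0$ from de Haan's theorem is an $x\to\infty$ statement about the trace, whereas here we need the companion $x\to0$ / $q\to\infty$ statement applied to the trace of $\len$ near $0$; Theorem~\ref{t.deH} is symmetric under swapping origin and infinity, so this should go through, but bookkeeping the slowly varying corrections and checking that $\kappa^\#$ is the correct de Bruijn conjugate to use (as opposed to some other slowly varying function) is where the care is needed. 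A secondary subtlety is verifying \qref{R.gendef2} and the divergence condition \qref{R.candef5} for the limiting $\dsm$ in part (a), and conversely checking that an arbitrary divergent \smeas\ really does yield a min history extending all the way to $t=0$ in part (b); both hinge on the same Tauberian correspondence and on the elementary estimate \qref{Lest} controlling $\len$ from below, which I would apply to the tail $[1,\infty)$ to guarantee integrability of $e^{-q\len(s)}\,ds/s$ there.
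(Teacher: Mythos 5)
Your treatment of parts (a) and (c) is essentially the paper's argument: normalize by $t\kappash(t)$, use the solution formula \qref{newsol5} with \qref{kappadef} to get $\log\bigl(w\kappa(w)\bigr)=-\int_t^\infty e^{-q\len(s)}\,ds/s$, add $-\log t$, and invert asymptotically using slow variation of $\kappash$. Two presentational remarks: the split should be at $\int_t^1 ds/s$ (your version with $\int_0^1 ds/s$ is formally divergent, though the intent is clear), and the detour through de Haan's theorem and \qref{linvconv} is unnecessary --- the paper gets $w\kappa(w)\sim t\lapdsm(q)$ directly at fixed $q$ as $t\dnto0$ and then applies de Bruijn asymptotic inversion, so the ``which conjugate'' worry you raise does not arise. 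Also, finiteness of $\int_0^1(1-e^{-q\len(s)})\,ds/s$ is not just ``controlled by existence down to $0$'': it comes from the explicit upper bound $\len(s)\le C s^{(\log 2)/Q_1}$ for $s<1$, obtained by reading the lower bound \qref{Lest} with $t$ fixed and the initial time variable; the tail integral over $[1,\infty)$ is finite automatically because \qref{newsol5} holds.

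The genuine gap is in part (b), at the step where you write ``define $\Rbar_t(q)=\forma^{-1}\bigl(\int_t^\infty e^{-q\len(s)}\,ds/s\bigr)$; this is a completely monotone function of $q$ \ldots hence the Laplace transform of a probability measure.'' That assertion is precisely the difficulty the paper flags in Section 2.3: $\forma^{-1}$ is not absolutely monotone (for $Q(z)=z^2$ it is $\tanh(w/2)$, whose second derivative is negative), so composing it with a completely monotone function does not automatically produce a completely monotone function, and nothing in your sketch supplies this. The paper closes the gap by approximation: choose finite \smeass\ $\sm_n\to\dsm$ with $c_n=\int y\inv\sm_n(dy)\to\infty$, set $t_n=1/c_n$, and run Theorem~\ref{thm.wp} from the genuine probability data $F_n(dx)=t_n\kappash(t_n)x\inv\sm_n(dx)$; convergence of the Laplace exponents plus asymptotic inversion gives convergence of the min histories $\len\nsup\to\len$, and then the limit of the (completely monotone) transforms $\Rbar_t\nsup$ is completely monotone, so the limit really is a solution satisfying \qref{newsol5}. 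Without some such step you have only a candidate formula, not an eternal solution. Secondarily, your construction of the trace from $\dsm$ (``dictated by asymptotic-inversion relations mirroring Section 6.3'') should be made concrete as in the paper: $(\log\lapdsm)'=\lapdsm'/\lapdsm$ is completely monotone, hence the Laplace transform of a measure $\tau A(d\tau)$; divergence of $A$ near $0$ and $\infty$ follows from $\lapdsm(\infty)=\infty$ and $\lapdsm(0)=0$, making $A$ a maximal trace; and a final rescaling $\len(s)\mapsto\len(as)$ is needed to absorb the integration constant and land exactly on \qref{lk1}. Your uniqueness argument (same limit $\dsm$ forces the same $\lapdsm$, hence the same trace and min history, hence the same solution via \qref{newsol5}) is fine.
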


To fix ideas, it may help to note that the self-similar solutions
are generated by the power-law Laplace exponents
\be
\label{eternal4}
\lapdsm(q) = {e^{\theta\gamma}}q^\theta, \quad \theta \in (0,1], 
\ee
where $\gamma$ is the Euler-Mascheroni constant, which satisfies \cite{Olver} 
\be
\label{eternal5}
\gamma = \int_0^1 \frac{1-e^{-s}}{s}\,ds - \int_1^\infty 
\frac{e^{-s}}{s} \,ds.
\ee
This normalization is chosen so the min 
\be
\label{eternal6}
\len_\theta(t) = t^{1/\theta}, \quad \theta \in (0,1]. 
\ee
The corresponding divergent \smeass\/ are given by
\be
\label{eternal7}
\dsm_\theta(x) = \frac{\theta e^{\theta\gamma}}{ \Gamma(2-\theta)}
x^{1-\theta}, \quad \theta \in (0,1].
\ee
\begin{proof}[Proof of Theorem~\ref{thm:LK}.]
In all that follows, $q>0$
is fixed, and we use the equivalence between convergence of \smeass\/
and pointwise convergence of Laplace exponents, as established in
\cite[Sec.\ 3]{MP3}, for example.

{\em 1.\/} First, assume that $\newrho$ is an eternal solution.
We claim that for all $q>0$, $\eta_t(q)\to\lapdsm(q)$ as $t\dnto0$, 
where $\lapdsm$ is given by \qref{lk1} and satisfies
$\lapdsm(\infty)=\infty$.
By \cite[Sec.\ 3]{MP3} it follows that $\lapdsm$ is the Laplace
exponent of a divergent \smeas\ $H$ and $G_t\to H$ as $t\dnto0$.

{\em 2.\/} We first recall from Theorem~\ref{thm.wp} that the min history
satisfies \qref{Lest} whenever $0<t_0<t$. Regarding $t$ as fixed and
$t_0$ variable, we conclude that 
\be\label{Lest2}
0<\len(s)< C s^{(\log 2)/Q_1}, \quad 0<s<1.
\ee
It follows from this, the estimate $1-e^{-q\len(s)}\le q\len(s)$,
and \qref{Lest} that the integrals in \qref{lk1} converge, so that
$\lapdsm(q)$ is finite for $0<q<\infty$.
Moreover, $\lapdsm(0)=0$ and  $\lapdsm(\infty)= \int_0^1 ds/s=\infty$.

{\em 3.\/} We let $w=w_t(q)=1-\Rbar_t(q) = t\kappa^\#(t)\eta_t(q)$ and use
the solution formula \qref{newsol5} together with \qref{kappadef}  
to write
\be\label{weq1}
\log(w\kappa(w)) = -\varphi(\Rbar_t(q)) = -\int_t^\infty
e^{-q\len(s)}\frac{ds}s.
\ee
Adding $-\log t=\int_t^1ds/s$ to both sides we find
\be
\log(w\kappa(w)/t) = 
\int_t^{1} \left( 1-e^{-q\len(s)}
\right) \frac{ds}{s} - \int_{1}^\infty e^{-q\len(s)} \frac{ds}{s}
=\log \lapdsm(q) +o(1)
\ee 
as $t\dnto0$. Hence $w\kappa(w)\sim t \lapdsm(q)$, and asymptotic
inversion yields
\be
w \sim t \lapdsm(q) \kappa^\#(t \lapdsm(q)) \sim 
 t \lapdsm(q) \kappa^\#(t)
\ee
since $\kappa^\#$ is slowly varying. But immediately this yields
$\eta_t(q)\to \lapdsm(q)$ as $t\dnto0$, and this finishes the proof of
(a).

{\em 4.\/} We now establish the converse. Let $\dsm$ be a 
divergent \smeas\ with Laplace exponent $\lapdsm$ (not known at first
to satisfy \qref{lk1}). We  will first establish that the
\LK\/ formula \qref{lk1} defines an appropriate min history $\len$,
then verify that $\len$ defines an eternal solution. 

First, we remark that the definition of the trace admits a 
natural modification for eternal
solutions. Given the min history of an eternal solution, we define the
trace through \qref{defA2}. Conversely, we say $A$ is a {\em maximal
  trace\/} if $A$ is a distribution function on $(0,\infty)$ such that
(i) $\lim_{\tau \to 0} A(\tau)=-\infty$, (ii) $\lim_{\tau \to \infty}
A(\tau)=\infty$, and (iii) $\int_0^{\tau_0} \tau A(d\tau) <\infty$ for
some $\tau_0 >0$. In this case, the min-history is given by
\qref{defL2} with $t_0=0$.

Since $\lapdsm$ is the Laplace exponent of a \smeas, $\lapdsm'$
and $1/\lapdsm$ are completely monotone functions. Thus, there is a
positive measure $A$ such that
\be
\label{eta12}
(\log \lapdsm)' = \frac{\lapdsm'}{\lapdsm} = \int_0^\infty e^{-q\tau}
\tau A(d\tau).
\ee
For every $\tau_0>0$, the measure $A$ satisfies the finiteness conditions
\be
\label{finite1}
(\log \lapdsm) '  \geq  \left\{ \begin{array}{ll} e^{-q\tau_0}
  \int_0^{\tau_0} \tau  A(d\tau), \\
\tau_0 \int_{\tau_0} ^\infty e^{-q\tau} A(d\tau). \end{array}\right.
\ee
Therefore, we may integrate \qref{eta12} between $q$ and $q_1 \in
(0,\infty)$ and rearrange terms to obtain 
\ba
\label{eta13}
\lefteqn{ \log \lapdsm(q) - \int_0^{\tau_0} (1-e^{-q\tau})
    A(d\tau) + \int_{\tau_0}^\infty e^{-q\tau} A(d\tau)} 
\\
\nn
&& = 
\log \lapdsm(q_1) - \int_0^{\tau_0} (1-e^{-q_1\tau})
    A(d\tau) + \int_{\tau_0}^\infty e^{-q_1\tau} A(d\tau):=C(q_1,\tau_0).
\ea
The right hand side is independent of $q$. We let $q \to \infty$ on
the left hand side, and use $\eta(\infty)=\infty$ to see that
$\int_0^{\tau_0} A(d\tau)=\infty$. Similarly, we let $q \to 0$ and use
$\eta(0)=0$ to see that $\int_{\tau_0}^\infty A(d\tau)=\infty$. Thus,
$A$ defines a maximal trace. Let $\len$ denote the associated min
history given by $\len = \exp (A^{\dag})$. 

{\em 5.\/} Since $\tau_0>0$ is arbitrary, we may suppose $\tau_0$ is
in the range of $\len$ and $\len(t_0)=\tau_0$. We change variables in 
\qref{eta13} to obtain 
\be
\label{eta14}
\log \lapdsm(q) - \int_0^1 \left(1-e^{-q\len(s)}\right) \frac{ds}{s} +
  \int_1^\infty e^{-q\len(s)} \frac{ds}{s} = C+\log t_0.
\ee
In order to obtain the \LK\/ formula in the form \qref{lk1}, we
just replace $\len$ by the rescaling $\hat\len(s)=\len(as)$ where $\log a=
C+\log t_0$.

{\em 6.\/} It remains to check that the solution formula
\qref{newsol5} defines a solution for every $t>0$. This is proven by
an approximation argument. We consider a sequence of finite \smeass\/
$\sm_n$ that converge to the divergent \smeas\/ $\dsm$.  
We may suppose that $c_n:=\int_0^\infty x^{-1}\sm_n(dx) \geq n$. 
Let $t_n=1/c_n$ so that $0 < t_n\leq n^{-1}$. 
We will show that the sequence of solutions $\newrho\nsup$ 
defined for $t\ge t_n$, with
initial data given by the probability measures 
\[ F_n(dx):= t_n\kappash(t_n)x^{-1}\sm_n(dx), \]
converges to a solution $\newrho$ satisfying \qref{newsol5}. 

{\em 7.\/}
Let $\eta_n$ be the Laplace exponent of $\sm_n$; 
then $\eta_n(q)\to\lapdsm(q)$ for $q>0$. Define
\be\label{wndef}
w_n(q) := 1-\Rbar_n(q) = t_n\kappash(t_n)\eta_n(q).
\ee
This is related to the min history $\len_n$ determined from $F_n$ 
by the solution formula as in \qref{weq1}, namely,
\be\label{wneq}
\log( w_n\kappa(w_n)/t_n) = 
\int_{t_n}^{1} \left( 1-e^{-q\len_n(s)}
\right) \frac{ds}{s} - \int_{1}^\infty e^{-q\len_n(s)} \frac{ds}{s}. 
\ee
Since $w_n(q)\sim t_n\eta_n(q) \kappash(t_n\eta_n(q))$ as
$n\to\infty$, asymptotic inversion yields $w_n\kappa(w_n)\sim
t_n\eta_n(q)$, and then  \qref{wneq} yields
\be
\log\lapdsm(q) 
 = \lim_{n \to \infty}
\int_{t_n}^1 \left(1 -e^{-q\len\nsup(s)} \right) \frac{ds}{s} -
\int_1^\infty e^{-q\len\nsup(s)}\frac{ds}{s}.
\ee
Convergence of completely monotone functions also implies convergence of all
derivatives. Thus,
\[ \frac{\lapdsm'}{\lapdsm} = \lim_{n \to \infty}
\int_{t_n}^\infty e^{-q\len\nsup(s)} \frac{\len\nsup(s)}{s} \,ds. \]
It follows that the inverse functions $\len\nsup \to \len$ at all points of
continuity. Therefore, we may let $n \to \infty$  in the solution formula 
\[ \Rbar_t\nsup(q) = \forma^{-1} \left(\int_t^\infty
e^{-q\len\nsup(s)}\frac{ds}{s}\right) \]
to see that $\newrho$ defines an eternal solution.
\end{proof}

\section*{Acknowledgements}
This material is based upon work supported by the National Science
Foundation under grant nos.\ 
DMS 06-04420, DMS 06-05006, DMS 07-48482,
and by the Center for Nonlinear Analysis under NSF grants 
DMS 04-05343 and 06-35983.
RLP and BN thank the DFG for partial support through a Mercator
professorship for RLP at Humboldt University and through the Research Group
{\it Analysis and Stochastics in Complex Physical Systems}.

\bibliography{mnp1}

\end{document}